%% -*- mode: latex; fill-column: 80 -*-

\documentclass{lmcs}
\pdfoutput=1

% LMCS Layouting Macros
\usepackage{lastpage}
\lmcsdoi{15}{1}{31}
\lmcsheading{}{\pageref{LastPage}}{}{}%
{Nov.~02,~2017}{Mar.~29,~2019}{}

\usepackage[utf8]{inputenc}

\usepackage{amsmath,mathtools,amssymb,stmaryrd,eufrak}
\usepackage{xspace}
\usepackage{tabularx}
\usepackage{mathpartir,url}

\usepackage{float}
\floatstyle{boxed}
\restylefloat{figure}

\theoremstyle{defC}
\newtheorem{exaC}[thm]{Example}

\usepackage{nf}

\begin{document}

\title[Proving Soundness of Extensional Normal-Form
  Bisimilarities]{Proving Soundness\texorpdfstring{\\}{} of Extensional Normal-Form
  Bisimilarities}

\author[]{Dariusz Biernacki\rsuper{a}}

\author[]{Sergue\"{\i} Lenglet\rsuper{b}}

\author[]{Piotr Polesiuk\rsuper{a}}

\address{\lsuper{a}University of Wroc\l{}aw, Wroc\l{}aw, Poland}
\email{dabi@cs.uni.wroc.pl}
\email{ppolesiuk@cs.uni.wroc.pl}

\address{\lsuper{b}Universit\'e de Lorraine, Nancy, France}
\email{serguei.lenglet@univ-lorraine.fr}

\thanks{This is a revised and extended version
  of~\cite{Biernacki-al:MFPS17}. This work was supported by PHC Polonium and by
  National Science Centre, Poland, grant no. 2014/15/B/ST6/00619.}

\keywords{delimited continuation, contextual equivalence, normal-form
  bisimulation, up-to technique}

\subjclass{D.3.3 Language Constructs and
  Features, F.3.2 Semantics of~Programming Languages}

\begin{abstract}
  Normal-form bisimilarity is a simple, easy-to-use behavioral equivalence that
  relates terms in $\lambda$-calculi by decomposing their normal forms into
  bisimilar subterms. Moreover, it typically allows for powerful up-to
  techniques, such as bisimulation up to context, which simplify bisimulation
  proofs even further. However, proving soundness of these relations becomes
  complicated in the presence of $\eta$-expansion and usually relies on ad hoc
  proof methods which depend on the language. In this paper we propose a more
  systematic proof method to show that an extensional normal-form bisimilarity
  along with its corresponding up to context technique are sound. We illustrate
  our technique with three calculi: the call-by-value $\lambda$-calculus, the
  call-by-value $\lambda$-calculus with the delimited-control operators
  \textshift{} and \textreset{}, and the call-by-value $\lambda$-calculus with
  the abortive control operators \textcallcc and \textabort. In the first two
  cases, there was previously no sound up to context technique validating the
  $\eta$-law, whereas no theory of normal-form bisimulations for a calculus with
  \textcallcc and \textabort has been presented before. Our results have been
  fully formalized in the Coq proof assistant.
\end{abstract}

\maketitle

%%%%%%%%%%%%%%%%%%%%%%%%%%%%%%%%%%%%%%%%%%%%%%%%%%%
\section{Introduction}%
\label{s:intro}

In formal languages inspired by the $\lambda$-calculus, the behavioral
equivalence of choice is usually formulated as a Morris-style contextual
equivalence~\cite{JHMorris:PhD}: two terms are equivalent if they behave the
same in any context. This criterion captures quite naturally the idea that
replacing a term by an equivalent one in a bigger program should not affect the
behavior of the whole program. However, the quantification over contexts makes
contextual equivalence hard to use in practice to prove the equivalence of two
given terms. Therefore, it is common to look for easier-to-use, \emph{sound}
alternatives that are at least included in contextual equivalence, such as
coinductively defined \emph{bisimilarities}.

Different styles of bisimilarities have been defined for the $\lambda$-calculus,
including \emph{applicative bisimilarity}~\cite{Abramsky-Ong:IaC93},
\emph{normal-form bisimilarity}~\cite{Lassen:LICS05} (originally called
\emph{open bisimilarity} in~\cite{Sangiorgi:LICS92}), and \emph{environmental
  bisimilarity}~\cite{Sangiorgi-al:TOPLAS11}. Applicative and environmental
bisimilarities compare terms by applying them to function arguments, while
normal-form bisimilarity reduces terms to normal forms, which are then
decomposed into bisimilar subterms. As we can see, applicative and environmental
bisimilarities still rely on some form of quantification over arguments, which
is not the case of normal-form bisimilarity. As a drawback, the latter is
usually not \emph{complete} w.r.t.\ contextual equivalence---there exist
contextually equivalent terms that are not normal-form bisimilar---while the
former are. Like environmental bisimilarity, normal-form bisimilarity usually
allows for \emph{up-to techniques}~\cite{Sangiorgi-Pous:11}, relations which
simplify equivalence proofs of terms by having less requirements than regular
bisimilarities. For example, reasoning up to context allows to forget about a
common context: to equate $\inctx C t$ and $\inctx C s$, it is enough to relate
$t$ and $s$ with a bisimulation up to context.

In the call-by-value $\lambda$-calculus, the simplest definition of normal-form
bisimilarity compares values by equating a variable only with itself, and a
$\lambda$-abstraction only with a $\lambda$-abstraction such that their bodies
are bisimilar. Such a definition does not respect call-by-value
$\eta$-expansion, since it distinguishes $x$ from $\lam y {\app x y}$. A less
discriminating definition instead compares values by applying them to a fresh
variable, thus relating $\lam y {\app v y}$ and $v$ for any value~$v$ such that
$y$ is not free in $v$: given a fresh $z$, $(\lam y {\app v y}) \iapp z$ reduces
to $v \iapp z$. Such a bisimilarity, that we call \textit{extensional
  bisimilarity},\footnote{Lassen uses the term \textit{bisimilarity up to
    $\eta$}~\cite{Lassen:MFPS99} for a normal-form bisimilarity that validates
  the $\eta$-law, but we prefer the term \textit{extensional bisimilarity} so that
  there is no confusion with notions referring to up-to techniques such as
  bisimulation up to context.}  relates more contextually equivalent terms, but
proving its soundness as well as proving the soundness of its up-to techniques
is more difficult, and usually requires ad hoc proof methods, as we detail in
the related work section (Section~\ref{s:relwork}).

Madiot et al.~\cite{Madiot-al:CONCUR14} propose a framework where proving the
soundness of up-to techniques is quite uniform and simpler. It also allows to
factorize proofs, since showing that reasoning up to context is sound
directly implies that the corresponding bisimilarity is a congruence, which is
the main property needed for proving its soundness. Madiot et al.\ apply the
method to environmental bisimilarities for the plain call-by-name
$\lambda$-calculus and for a call-by-value $\lambda$-calculus with references,
as well as to a bisimilarity for the $\pi$-calculus. In a subsequent
work~\cite{Aristizabal-al:FSCD16}, we extend this framework to define
environmental bisimilarities for a call-by-value $\lambda$-calculus with
multi-prompted delimited-control operators. We propose a distinction between
strong and regular up-to techniques, where regular up-to techniques cannot be
used in certain bisimilarity tests, while strong ones can always be used. This
distinction allows to prove sound more powerful up-to techniques, by forbidding
their use in cases where it would be unsound to apply them.

So far, the method developed in~\cite{Madiot-al:CONCUR14,Aristizabal-al:FSCD16}
have been used in the $\lambda$-calculus only for environmental
bisimilarities. In this paper, we show that our extended
framework~\cite{Aristizabal-al:FSCD16} can also be used to prove the soundness
of extensional normal-form bisimilarities and their corresponding bisimulation
up to context. We first apply it to the plain call-by-value $\lambda$-calculus,
in which an extensional normal-form bisimilarity, albeit without a corresponding
bisimulation up to context, have already been proved sound~\cite{Lassen:LICS05},
to show how our framework allows to prove soundness for both proof techniques at
once. We then consider a call-by-value $\lambda$-calculus with the
delimited-control operators \textshift{} and
\textreset~\cite{Danvy-Filinski:LFP90}, for which there has been no sound
bisimulation up to context validating the $\eta$-law either, and we show that
our method applies seamlessly in that setting as well. Finally, we address a
calculus of abortive control, i.e., the call-by-value $\lambda$-calculus with
\textcallcc and \textabort~\cite{Felleisen-Friedman:FDPC3,Felleisen-Hieb:TCS92}.
If a normal-form bisimilarity has been defined for the
$\lambda\mu$-calculus~\cite{Stoevring-Lassen:POPL07}, a more expressive calculus
with abortive control, no theory of normal-form bisimulations has been proposed
so far for \textcallcc and \textabort themselves. This confirms the robustness
of the proof method we advocate in this article, but it also provides a new
operational technique for reasoning about classical calculi of abortive
continuations introduced by Felleisen et al.

Our results have been fully formalized in the Coq proof assistant.
%, thus increasing the confidence in proofs that can be quite meticulous.
The goal of the formalization is twofold: to increase our confidence in the
correctness of the proof method, but also to leave most of the routine and
repeatable compatibility proofs outside of the article proper. While the proof
method is introduced and illustrated in detail in the main body of the paper,
the interested reader can consult the formalization, commented and with pointers
to the paper, for the complete development (excluding the practical examples
that are presented here for illustration purposes only). The Coq formalization,
available at~\url{https://bitbucket.org/pl-uwr/diacritical}, uses a de Bruijn
representation for $\lambda$-terms, where the de Bruijn indices are encoded
using nested datatypes~\cite{Bird-Paterson:JFP99}.

The paper is organized as follows: in Section~\ref{s:relwork}, we discuss the
previous proofs of soundness of extensional normal-form bisimilarities. In
Section~\ref{s:lamcal}, we present the proof method for the call-by-value
$\lambda$-calculus, that we then apply to the $\lambda$-calculus with delimited
control in Section~\ref{s:delcon}, and to the $\lambda$-calculus with abortive
control in Section~\ref{s:abortcon}. We conclude in Section~\ref{s:conclusion}.
Compared to the conference article~\cite{Biernacki-al:MFPS17},
Section~\ref{s:abortcon} is entirely new, whereas minor revisions have been done
to the remaining sections.

%%%%%%%%%%%%%%%%%%%%%%%%%%%%%%%%%%%%%%%%%%%%%%%%%%%
\section{Related Work}%
\label{s:relwork}

Normal-form bisimilarity has been first introduced by
Sangiorgi~\cite{Sangiorgi:LICS92} and has then been defined for many variants of
the $\lambda$-calculus, considering
$\eta$-expansion~\cite{Lassen:MFPS99,Lassen:LICS05,Lassen:LICS06,
  Stoevring-Lassen:POPL07,Lassen-Levy:CSL07,Lassen-Levy:LICS08,
  Biernacki-Lenglet:FLOPS12,Biernacki-al:HAL15} or
not~\cite{Lassen:99,Lassen:MFPS05}. In this section we focus on the articles
treating the $\eta$-law, and in particular on the congruence and soundness
proofs presented therein.

In~\cite{Lassen:MFPS99}, Lassen defines several equivalences for the
call-by-name $\lambda$-calculus, depending on the chosen semantics. He defines
\emph{head-normal-form (hnf) bisimulation} and \emph{hnf bisimulation up
  to~$\eta$} for the semantics based on reduction to head normal form (where
$\eta$-expansion applies to any term $t$, not only to a value as in the
call-by-value $\lambda$-calculus), and \emph{weak-head-normal-form (whnf)
  bisimulation} based on reduction to weak head normal form. (It does not make
sense to consider a \emph{whnf bisimulation up to $\eta$}, since it would be
unsound, e.g., it would relate a non-terminating term $\Omega$ with a normal
form $\lam{x}{\app{\Omega}{x}}$.)
%% Dealing with the $\eta$-law for head normal forms takes advantage of
%% their uniform shape: $\lam{x_1 \dots x_n}{\app{x}{t_1 \dots t_m}}$
%% with $n, m \geq 0$. Relating two such head normal forms in hnf
%% bisimulations up to~$\eta$ simply consists in counting the number of
%% abstractions in the normal forms, and $\eta$-expanding the shorter one
%% accordingly, before relating the corresponding subterms.
The paper also defines a bisimulation up to context for each
bisimilarity.

The congruence proofs for the three bisimilarities follow from the main lemma
stating that if a relation is a bisimulation, then so is its substitutive and
context closure. The lemma is proved by nested induction on the definition of
the closure and on the number of steps in the evaluation of terms to normal
forms. It can be easily strengthened to prove the soundness of a bisimulation up
to context: if a relation is a bisimulation up to context, then its substitutive
and context closure is a bisimulation.  The nested induction proof method has
been then applied to prove congruence for a whnf bisimilarity for the
call-by-name $\lambda\mu$-calculus~\cite{Lassen:99} (a calculus with
continuations), an extensional hnf bisimilarity for the call-by-name
$\lambda$-calculus with pairs~\cite{Lassen:LICS06}, and a whnf bisimilarity for
a call-by-name $\lambda$-calculus with McCarthy's ambiguous choice ({\tt amb})
operator~\cite{Lassen:MFPS05}. These papers do not define any corresponding
bisimulation up to context.

Lassen uses another proof technique in~\cite{Lassen:LICS05}, where he defines an
\emph{eager normal form (enf) bisimilarity} and an \emph{enf bisimilarity up to
  $\eta$}.\footnote{While weak head normal forms are normal forms under
  call-by-name evaluation, eager normal forms are normal forms under
  call-by-value evaluation of $\lambda$-terms.} Lassen shows that the
bisimilarities correspond to B{\"o}hm trees equivalence (up to $\eta$) after a
continuation-passing style (CPS) translation, and then he deduces congruence of
the enf bisimilarities from the congruence of the B{\"o}hm trees equivalence. A
CPS-translation based technique has also been used in~\cite{Lassen:LICS06} to
prove congruence of the extensional bisimilarity for the call-by-name
$\lambda$-calculus (also with surjective pairing), the $\lambda\mu$-calculus,
and the $\Lambda\mu$-calculus. Unlike the nested induction proof method, this
technique does not extend to a soundness proof of a bisimulation up to context.

In~\cite{Lassen:LICS05}, Lassen claims that \textit{``It is also possible to prove
  congruence of enf bisimilarity and enf bisimilarity up to $\eta$ directly like
  the congruence proofs for other normal form bisimilarities (tree equivalences)
  in~\cite{Lassen:MFPS99}, although the congruence proofs (\dots)  require
  non-trivial changes to the relational substitutive context closure operation
  in op.cit.\ (\dots)  Moreover, from the direct congruence proofs, we can derive
  bisimulation “up to context” proof principles like those for other normal form
  bisimilarities in op.cit.''} To our knowledge, such a proof is not published
anywhere; we tried to carry out the congruence proof by following this comment,
but we do not know how to conclude in the case of enf bisimilarity up to
$\eta$. We discuss what the problem is at the end of the proof of
Lemma~\ref{l:app-lambda}.

St{\o}vring and Lassen~\cite{Stoevring-Lassen:POPL07} define extensional enf
bisimilarities for three calculi: $\lambda\mu$ (continuations), $\lambda\rho$
(mutable state), and $\lambda\mu\rho$ (continuations and mutable state). The
congruence proof is rather convoluted and is done in two stages: first, prove
congruence of a non-extensional bisimilarity using the nested induction
of~\cite{Lassen:MFPS99}, then extend the result to the extensional bisimilarity
by a syntactic translation that takes advantage of an infinite $\eta$-expansion
combinator. The paper does not mention bisimulation up to context.

Lassen and Levy~\cite{Lassen-Levy:CSL07,Lassen-Levy:LICS08} define a normal-form
bisimilarity for a CPS calculus called JWA equipped with a rich type system
(including product, sum, recursive types;~\cite{Lassen-Levy:LICS08} adds
existential types). The bisimilarity respects the $\eta$-law, and the congruence
proof is done in terms of game semantics notions. Again, these papers do not
mention bisimulation up to context.

In a previous work~\cite{Biernacki-Lenglet:FLOPS12}, we define extensional enf
bisimilarities and bisimulations up to context for a call-by-value
$\lambda$-calculus with delimited-control operators. The (unpublished)
congruence and soundness proofs follow Lassen~\cite{Lassen:MFPS99}, but are
incorrect: one case in the induction, that turns out to be problematic, has been
forgotten. In~\cite{Biernacki-al:HAL15} we fix the congruence proof of the
extensional bisimilarity, by doing a nested induction on a different notion of
closure than Lassen. This approach fails when proving soundness of a
bisimulation up to context, and therefore bisimulation up to context does not
respect the $\eta$-law in~\cite{Biernacki-al:HAL15}.

To summarize:
\begin{itemize}
\item[-] The soundness proofs for extensional hnf bisimilarities are uniformly
  done using a nested induction proof
  method~\cite{Lassen:MFPS99,Lassen:LICS06}. The proof can then be turned into a
  soundness proof for bisimulation up to context.
\item[-] The soundness proofs of extensional enf bisimilarities either follow
  from a CPS translation~\cite{Lassen:LICS05,Lassen:LICS06}, or other ad hoc
  arguments~\cite{Stoevring-Lassen:POPL07,Lassen-Levy:CSL07,Lassen-Levy:LICS08,
    Biernacki-al:HAL15} which do not carry over to a soundness proof for a
  bisimulation up to context.
\item[-] The only claims about congruence of an extensional enf bisimilarity as
  well as soundness of the corresponding bisimulation up to context using a
  nested induction proof are either wrong~\cite{Biernacki-Lenglet:FLOPS12} or
  are not substantiated by a presentation of the actual
  proof~\cite{Lassen:LICS05}. The reason the nested induction proof works for
  extensional hnf bisimilarities and not for extensional enf bisimilarities
  stems from the difference in the requirements on the shape of
  $\lambda$-abstractions the two normal forms impose: whereas the body of a
  $\lambda$-abstraction in hnf is also a hnf, the body of a
  $\lambda$-abstraction in enf is an arbitrary term.
\end{itemize}
In this paper, we consider an extensional enf bisimilarity for three calculi:
the plain $\lambda$-calculus and its extensions with delimited and abortive
continuations, and in each case we present a soundness proof of the
corresponding enf bisimulation up to context from which congruence of the
bisimilarity follows.

%%%%%%%%%%%%%%%%%%%%%%%%%%%%%%%%%%%%%%%%%%%%%%%%%%%
\section{Call-by-value \texorpdfstring{$\lambda$}{lambda}-calculus}%
\label{s:lamcal}

We introduce a new approach to normal-form bisimulations that is based on the
framework we developed previously~\cite{Aristizabal-al:FSCD16}. The calculus of
discourse is the plain call-by-value $\lambda$-calculus.

%%%%%%%%%%%%%%%%%%%%%%%%%%
\subsection{Syntax, semantics, and normal-form bisimulations}%
\label{ss:lamcal-calculus}

We let $x$, $y$, $z$ range over variables. The syntax of terms ($t$, $s$),
values ($v$, $w$), and call-by-value evaluation contexts~($E$) is given as
follows:

\[
\begin{array}{rcl}
  t,s & \bnfdef & v \bnfor \app t s
  \\
  v, w & \bnfdef & x \bnfor \lam x t
  \\[1mm]
  E & \bnfdef & \hole \bnfor \app E t \bnfor \app v E
\end{array}
\]

\vspace{2mm}\noindent An abstraction $\lam x t$ binds $x$ in $t$; a variable
that is not bound is called free. The set of free variables in a term $t$ is
written $\fv t$. We work modulo $\alpha$-conversion of bound variables, and a
variable is called fresh if it does not occur in the terms under
consideration. Contexts are represented outside-in, and we write $\inctx E t$
for plugging a term in a context. We write $\subst t x v$ for the
capture-avoiding substitution of $v$ for~$x$ in $t$. We write successive
$\lambda$-abstractions $\lam x {\lam y t}$ as $\lam{xy} t$.

We consider a call-by-value reduction semantics for the language

\[
\inctx{E}{\app{(\lam x t)} v}  \rawred \inctx{E}{\subst t x v}
\]

\vspace{2mm}\noindent We write $\rtc\rawred$ for the reflexive and transitive
closure of $\rawred$, and $\redto t s$ if $\redrtc t s$ and $s$ cannot reduce;
we say that $t$ evaluates to $s$.

Eager normal forms are either values or \emph{open stuck terms} of the form
$\inctx E {\app x v}$.  Normal-form bisimilarity relates terms by comparing
their normal forms (if they exist).  For values, a first possibility is to
relate separately variables and $\lambda$-abstractions: a variable $x$ can be
equated only to $x$, and $\lam x t$ is bisimilar to $\lam x s$ if $t$ is
bisimilar to $s$. As explained in the introduction, this does not respect
$\eta$-expansion: the $\eta$-respecting definition compares values by applying
them to a fresh variable. Given a relation $\rel$ on terms, we reflect how
values and open stuck terms are tested by the relations~$\testval \rel$,
$\testevctx \rel$, and $\testopen \rel$, defined as follows:

\begin{mathpar}
  \inferrule{\app v x \rel \app w x \\ x \mbox{ fresh}}
  {v \testval\rel w}
  \and
  \inferrule{\inctx E x \rel \inctx{E'} x \\ x \mbox{ fresh}}
  {E \testevctx \rel E'}
  \and
  \inferrule{E \testevctx\rel E' \\ v \testval\rel w}
  {\inctx E {\app x v} \testopen\rel \inctx {E'}{\app x w}}
\end{mathpar}

\vspace{2mm}
\begin{rem}%
\label{r:abstrvars}
Traditionally, normal-form bisimulations are construed as an open version of
applicative bisimulations in that they test values by applying them to a free
variable~\cite{Lassen:LICS05}, rather than to all possible closed
values~\cite{Abramsky-Ong:IaC93}. However, a connection with B{\"o}hm or
L{\'e}vy-Longo trees~\cite{Lassen:MFPS99} aside, one could introduce a separate
category of variables that would represent \emph{abstract values}, and use these
for the purpose of testing functional values. In such an approach, the reduction
relation would cater for closed terms only, as far as the term variables are
concerned, and the notion of an open stuck term could be replaced with a notion
of a \emph{value-stuck term}. In this work we stick to the traditional approach
to testing functional values (witness the definition of $\testval{R}$), but in
Section~\ref{s:abortcon} we propose an extension which is analogous to the one
sketched in this remark, and we introduce a separate category of variables
representing \emph{abstract contexts}, a notion dual to that of abstract values.
\end{rem}

\vspace{2mm} We can now define (extensional) normal-form bisimulation and
bisimilarity, using a notion of progress.
\begin{defi}%
  \label{def:nf-classic}
  A relation $\rel$ progresses to $\rels$ if $t \rel s$ implies:
  \begin{itemize}
  \item if $\red t {t'}$, then there exists $s'$ such that $\redrtc s {s'}$ and
    $t' \rels s'$;
  \item if $t=v$, then there exists $w$ such that $\redto s w$, and $v
    \testval\rels w$;
  \item if $t=\inctx E {\app x v}$, then there exist $E'$, $w$ such that $\redto
    s {\inctx {E'}{\app x w}}$ and $\inctx E {\app x v} \testopen\rels \inctx
    {E'}{\app x w}$;
  \item the converse of the above conditions on $s$.
  \end{itemize}
\end{defi}

\noindent
A bisimulation is then defined as a relation which progresses to itself, and
bisimilarity as the union of all bisimulations. Our definition is in a
small-step style, unlike Lassen's~\cite{Lassen:LICS05}, as we believe small-step
is more flexible, since we can recover a big-step reasoning with up to reduction
(Section~\ref{ss:lamcal-up-to-nf}). In usual
definitions~\cite{Lassen:LICS05,Stoevring-Lassen:POPL07,Biernacki-al:HAL15}, the
$\beta$-reduction is directly performed when a $\lambda$-abstraction is applied
to a fresh variable, whereas we construct an application in order to uniformly
treat all kinds of values, and hence account for $\eta$-expansion. However, with
this approach a naive reasoning up to context would be unsound because it would
equate any two values: if $v$ and $w$ are related, then $\app v x$ and
$\app w x$ are related up to context.  In our framework, we prevent this issue
by not allowing this up-to technique in that case, as we now explain.

We recast the definition of normal-form bisimilarity in the framework of our
previous work~\cite{Aristizabal-al:FSCD16}, which is itself an extension of a
work by Madiot et al.~\cite{Madiot-al:CONCUR14,Madiot:PhD}. The goal is to
factorize the congruence proof of the bisimilarity with the soundness proofs of
the up-to techniques. The novelty in~\cite{Aristizabal-al:FSCD16} is that we
distinguish between \emph{active} and \emph{passive} clauses, and we forbid some
up-to techniques to be applied in a passive clause. Whereas this distinction
does not change the notions of bisimulation or bisimilarity, it has an impact on
the bisimilarity congruence proof.

\begin{defi}%[Diacritical progress]
  \label{def:progress}
  A relation $\rel$ \textit{diacritically progresses} to $\rels$, $\relt$ written
  $\rel \pprogress \rels, \relt$, if $\rel \mathop\subseteq \rels$, $\rels
  \mathop\subseteq \relt$, and $t \rel s$ implies:
  \begin{itemize}
  \item if $\red t {t'}$, then there exists $s'$ such that $\redrtc s {s'}$ and
    $t' \relt s'$;
  \item if $t=v$, then there exists $w$ such that $\redto s w$, and $v
    \testval\rels w$;
  \item if $t=\inctx E {\app x v}$, then there exist $E'$, $w$ such that $\redto
    s {\inctx {E'}{\app x w}}$ and $\inctx E {\app x v} \testopen\relt \inctx
    {E'}{\app x w}$;
    \item the converse of the above conditions on $s$.
  \end{itemize}
  A normal-form bisimulation is a relation $\rel$ such that $\rel \pprogress
  \rel, \rel$, and normal-form bisimilarity $\nfbisim$ is the union of all
  normal-form bisimulations.
\end{defi}
The difference between Definitions~\ref{def:progress} and~\ref{def:nf-classic}
is only in the clause for values, where we progress towards a different relation
than in the other clauses of Definition~\ref{def:progress}. We say that the
clause for values is passive, while the others are active. A bisimulation $\rel$
progresses towards $\rel$ in passive and active clauses, so the two notions of
bisimulation coincide: $\rel$ is a bisimulation according to
Definition~\ref{def:progress} iff it is a bisimulation by
Definition~\ref{def:nf-classic}. Consequently, the resulting bisimilarity is
also the same between the two definitions.

\begin{exa}%
  \label{ex:wads}
  Let $\theta \defeq \lam{zx}{x \iapp \lam y {z \iapp z \iapp x \iapp y}}$ and
  $\fix v \defeq \lam x {\theta \iapp \theta \iapp v \iapp x}$ for a given $v$;
  note that $\redrtc{\fix v \iapp x}{v \iapp \fix v \iapp x}$. Wadsworth's
  infinite $\eta$-expansion combinator~\cite{Barendregt:84} can be defined as
  $\wads \defeq \fix{\lam {fxy}{x \iapp {(f \iapp y)}}}$. Let
  $\idbis \defeq \{ (t, t) \mid t \mbox{ any term} \}$ be the
  identity bisimulation. We prove that $\lam x x \nfbisim \wads$, by showing
  that
  \begin{align*}
    \mathord{\rel} \defeq \idbis \cup \{(\lam x x, \wads) \}
    &  \cup \{ (t, s) \mid \redrtc{(\lam x x)
                                   \iapp y} t, \redrtc{\app \wads y} s \bnfor y \mbox{
                                   fresh} \} \\
                                 & \cup \{ (y \iapp z, t) \mid \redrtc{(\lam x {y
                                   \iapp {(\app \wads x)}})
                                   \iapp z} t \bnfor y, z \mbox{
                                   fresh} \}
  \end{align*}
  is a bisimulation. Indeed, to compare $\lam x x$ and $\wads$, we have to
  relate $\app {(\lam x x)} y$ and $\app \wads y$, but $\redrtc{\app \wads
    y}{\lam x {y \iapp {(\app \wads x)}}}$. We then have to equate $\app y z$
  and $(\lam x {y \iapp {(\app \wads x)}}) \iapp z$, the latter evaluating to $y
  \iapp {\lam x {z \iapp {(\app \wads x)}}}$. To relate these open stuck terms,
  we have to equate~$\hole$ and $\hole$ (with $\idbis$), and~$z$ with $\lam x {z
    \iapp {(\app \wads x)}}$, but these terms are already in $\rel$.
\end{exa}

The purpose of the distinction between active and passive is not to change
regular bisimulation proofs, but instead to affect how up-to techniques are
applied, by forbidding the use of some of them in a passive clause. In
particular, reasoning up to context is not allowed in the value case, meaning
that we cannot use the fact that $\app v x$ and $\app w x$ are related up to
context whenever $v$ and $w$ are related to conclude a proof with bisimulation
up to context. In contrast, it is safe to use any (combination of) up-to
techniques in an active clause. In the next section, we explain how to prove
that a function on relations is an up-to technique and whether it can be safely
used in a passive clause or not.

%%%%%%%%%%%%%%%%%%%%%%%%%%
\subsection{Up-to techniques, general definitions}%
\label{ss:lamcal-up-to-nf-general}

We recall here the definitions from our previous
work~\cite{Aristizabal-al:FSCD16}. The goal of up-to techniques is to simplify
bisimulation proofs: instead of proving that a relation~$\rel$ is a
bisimulation, we show that $\rel$ respects some looser constraints which still
imply bisimilarity. In our setting, we distinguish the up-to techniques which
can be used in passive clauses (called \emph{strong} up-to techniques), from the
ones which cannot.

\begin{defi}%
  \label{def:up-to}
  An up-to technique (resp.\ strong up-to technique) is a function $f$ on
  relations such that $\rel \pprogress \rel, f(\rel)$ (resp.\ $\rel \pprogress
  f(\rel), f(\rel)$) implies $\rel \mathop\subseteq \nfbisim$. When $f$ is an
  up-to technique (resp.\ strong up-to technique) and $\rel \pprogress \rel,
  f(\rel)$ (resp.\ $\rel \pprogress f(\rel), f(\rel)$), we say that $\rel$ is a
  bisimulation up to $f$.
\end{defi}

\begin{exa}%
  \label{ex:upto-refl}
  In Example~\ref{ex:wads}, we have to include identical terms (related by
  $\idbis$) in the definition of the candidate relation, although we already
  know such terms are bisimilar. To avoid doing so, we define the up to
  reflexivity technique $\rawrefl$ as follows.
  \begin{mathpar}
    \inferrule{ }{t \utrefl\rel t}
  \end{mathpar}
  In our setting, we show that $\rawrefl$ is a strong up-to technique and can be
  used after active and passive clause. We can then define $\rel$ in
  Example~\ref{ex:wads} without $\idbis$ and show it is a bisimulation up to
  reflexivity.
\end{exa}

\begin{exa}%
  \label{ex:upto-red-ctx}
  In deterministic languages, a common up-to technique is to reason \emph{up to
    reduction}~\cite{Sangiorgi-Pous:11}, which allows terms to reduce before
  being related.
  \begin{mathpar}
    \inferrule{\redrtc t {t'} \\ \redrtc s {s'} \\ t' \rel s'} {t \utred\rel s}
  \end{mathpar}
  With this technique, one can ignore the intermediary reduction steps and
  reason in a big-step way even with a small-step semantics. We prove in this
  section that $\rawutred$ is another example of strong up-to technique. In
  contrast, up to context, discussed in Section~\ref{ss:lamcal-up-to-nf} is not
  strong and cannot be used after passive transitions.
\end{exa}

Proving that a given~$f$ is an up-to technique is difficult with
Definition~\ref{def:up-to}, in part because this notion is not stable under
composition or union~\cite{Madiot:PhD}. Following Madiot, Pous, and
Sangiorgi~\cite{Sangiorgi-Pous:11,Madiot-al:CONCUR14}, we rely on a notion
of~\emph{compatibility}, which gives sufficient conditions for~$f$ to be an
up-to technique, and is easier to establish, as functions built out of
compatible functions using composition and union remain compatible.

We first need some auxiliary notions on notations on functions on relations,
ranged over by $f$, $g$, and $h$ in what follows. We define $f \subseteq g$ and
$f \cup g$ argument-wise, e.g., $(f \cup g)(\rel)=f(\rel) \cup g(\rel)$ for all
$\rel$. We define $f^\omega$ as $\bigcup_{n \in \mathbb N} f^n$. We write
$\rawid$ for the identity function on relations, and $\fid f$ for
$f \mathop\cup \rawid$. The technique $\rawrefl$ of Example~\ref{ex:upto-refl}
differs from $\rawid$ as the former builds the identify on terms (for all $\rel$
and $t$, $t \utrefl\rel t$), while the latter is the identity on relations (for
all $\rel$, $t$, and $s$, $t \rel s$ implies $t \utid\rel s$).

Given a set $\setF$ of functions, we also write~$\setF$ for the function defined
as $\bigcup_{f \in \setF} f$. We say a function $f$ is \emph{generated from
  $\setF$} if $f$ can be built from functions in $\setF$ and $\rawid$ using
union, composition, and $\cdot^\omega$. The largest function generated from
$\setF$ is ${\fid \setF}^\omega$.

A function~$f$ is \emph{monotone} if $\rel \mathop\subseteq \rels$ implies
$f(\rel) \mathop\subseteq f(\rels)$.  We write $\finpower\rel$ for the set of
finite subsets of $\rel$, and we say $f$ is \emph{continuous} if it can be
defined by its image on these finite subsets, i.e., if
$f(\rel) \mathop\subseteq \bigcup_{\rels \in \finpower\rel}f(\rels)$. The up-to
techniques of the present paper are defined by inference rules with a finite
number of premises, like $\rawutred$ or $\rawrefl$, so we can easily show they
are continuous. Continuous functions are interesting because of their
properties:\footnote{Our formalization revealed an error in previous
  works~\cite{Aristizabal-al:FSCD16,Madiot:PhD} which use $f$ instead of
  $\fid f$ in the last property of Lemma~\ref{l:continuity} (expressing
  idempotence of ${\fid f}^\omega$)---$\rawid$ has to be factored in for the
  property to hold.}

\begin{lem}%
  \label{l:continuity}
  If $f$ and $g$ are continuous, then $f \comp g$ and $f \cup g$ are continuous.
  If $f$ is continuous, then $f$ is monotone, and $f \comp {\fid f}^\omega
  \subseteq {\fid f}^\omega$.
\end{lem}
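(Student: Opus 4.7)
The plan is to establish the four claims in a careful order, since continuity of $f \comp g$ and the idempotence-like property both rely on monotonicity. The definition of continuity, although stated as the inclusion $f(\rel) \subseteq \bigcup_{\rel' \in \finpower{\rel}} f(\rel')$, is intended as an equation (``can be defined by its image on finite subsets''); the reverse inclusion amounts to monotonicity on finite subsets, which is manifest for the inference-rule-defined up-to techniques considered in the paper. With that reading, the proof splits naturally into four successive steps.

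I would handle the two easier parts first. For $f \cup g$, a direct calculation gives $(f \cup g)(\rel) = f(\rel) \cup g(\rel) \subseteq \bigcup_{\rel' \in \finpower{\rel}} f(\rel') \cup \bigcup_{\rel' \in \finpower{\rel}} g(\rel') = \bigcup_{\rel' \in \finpower{\rel}} (f \cup g)(\rel')$. For monotonicity, if $\rel \subseteq \rels$ then $\finpower{\rel} \subseteq \finpower{\rels}$, so $f(\rel) = \bigcup_{\rel' \in \finpower{\rel}} f(\rel') \subseteq \bigcup_{\rel' \in \finpower{\rels}} f(\rel') = f(\rels)$.

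Continuity of $f \comp g$ is the most involved step. Given $x \in f(g(\rel))$, apply continuity of $f$ to obtain a finite $A \subseteq g(\rel)$ with $x \in f(A)$; then for each $y \in A$, apply continuity of $g$ to obtain a finite $\rel_y \subseteq \rel$ with $y \in g(\rel_y)$. Taking $\rel^\star = \bigcup_{y \in A} \rel_y$ (a finite union of finite sets, contained in $\rel$), monotonicity of $g$ yields $A \subseteq g(\rel^\star)$, and monotonicity of $f$ then gives $x \in f(A) \subseteq f(g(\rel^\star)) = (f \comp g)(\rel^\star)$, with $\rel^\star \in \finpower{\rel}$.

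For the last property, first observe that $(\fid f)^n(\rel)$ is non-decreasing in $n$: since $\rawid \subseteq \fid f$, we have $(\fid f)^n(\rel) \subseteq \fid f((\fid f)^n(\rel)) = (\fid f)^{n+1}(\rel)$, and this does not require monotonicity of $f$. Given $x \in f({\fid f}^\omega(\rel))$, continuity of $f$ yields a finite $A \subseteq {\fid f}^\omega(\rel) = \bigcup_n (\fid f)^n(\rel)$ with $x \in f(A)$; by finiteness and the chain property, $A \subseteq (\fid f)^n(\rel)$ for some $n$, so monotonicity of $f$ gives $x \in f(A) \subseteq f((\fid f)^n(\rel)) \subseteq (\fid f)^{n+1}(\rel) \subseteq {\fid f}^\omega(\rel)$. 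The main obstacle is the composition case, where two families of finite witnesses from nested applications of continuity must be merged; this merging only works once monotonicity is in hand, which is why the order of proof differs from the order of the statement.
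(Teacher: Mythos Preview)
The paper states this lemma without proof, deferring to the accompanying Coq formalization, so there is no written argument to compare against; your proof is the standard one and each of the four steps is correct. Your remark that the one-sided inclusion $f(\rel) \subseteq \bigcup_{\rels \in \finpower\rel} f(\rels)$ does not by itself yield monotonicity is well taken (e.g., a function that returns a fixed nonempty set on finite inputs and $\emptyset$ on infinite ones satisfies the inclusion but is not monotone), so the intended reading is indeed the equation suggested by the phrase ``can be defined by its image on finite subsets''; under that reading your derivations of monotonicity, of continuity of $f \comp g$, and of $f \comp {\fid f}^\omega \subseteq {\fid f}^\omega$ go through exactly as written.
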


Compatibility relies on a notion of \emph{evolution} for functions on relations,
which can be seen has the higher-order equivalent of progress.
\begin{defi}%[Evolution, strong evolution]
  A function $f$ evolves to $g, h$, written $f \fevolve g, h$, if for all $\rel
  \pprogress \rel, \relt$, we have $f(\rel) \pprogress g(\rel), h(\relt)$. A
  function $f$ \emph{strongly} evolves to $g, h$, written $f \sevolve g, h$, if
  for all $\rel \pprogress \rels, \relt$, we have $f(\rel) \pprogress g(\rels),
  h(\relt)$.
\end{defi}
\noindent
We have $f \fevolve g, h$ or $f \sevolve g, h$ if terms related by $f(\rel)$
become related by respectively $g(\rel)$ and $h(\relt)$, or $g(\rels)$ and
$h(\relt)$, depending on how $\rel$ progresses. Like with progress, evolution
distinguishes passive from active clauses, so that $g$ is used after passive
clauses and $h$ after active ones. We further distinguish between regular
evolution $\fevolve$ and strong evolution $\sevolve$: for regular evolution, we
consider $\rel$ such that $\rel \pprogress \rel, \relt$, while strong evolution
is more liberal, as it allows for relations $\rel$ such that $\rel \pprogress
\rels, \relt$. It follows the discrepancy between regular and strong up-to
techniques, the former being not allowed after passive clauses, while the latter
are.

The next examples illustrate strong evolution; for an example of regular
evolution, see Lemma~\ref{l:app-lambda}.

\begin{exa}
  It is easy to prove that $\rawrefl \sevolve \rawrefl, \rawrefl$, as identical
  terms remain equal when they progress.
\end{exa}

\begin{exa}%
  \label{ex:red-sevolve}
  We show that
  $\rawutred \sevolve \rawutred \cup \rawid, \rawutred \cup \rawid$. Let
  $\rel \pprogress \rels, \relt$. We first have to show the inclusions
  $\utred\rel \mathord{\subseteq} \utred\rels \cup \rels$ and
  $\utred\rels \cup \rels \mathord{\subseteq} \utred\relt \cup \relt$; these
  hold because $\rel \mathord{\subseteq} \rels$,
  $\rels \mathord{\subseteq} \relt$ (by definition of $\pprogress$) and
  $\rawutred$ is continuous and therefore monotone.

  Next, let $t \utred \rel s$; by definition, there exist $t'$ and $s'$ such
  that $\redrtc t {t'}$, $\redrtc s {s'}$, and $t' \rel s'$. We consider in turn
  the different progress clauses of Definition~\ref{def:progress}.

  For the reduction clause, let $t''$ such that $\red t {t''}$; because this
  clause is active, we want to find $s''$ such that $\redrtc s {s''}$ and
  $t'' \mathrel{(\rawutred \cup \rawid)(\relt)} s''$. We distinguish two cases:
  first, suppose $t \neq t'$. Because the semantics is deterministic, we have
  $\redrtc {t''}{t'}$, and therefore $t'' \utred \rel s$ holds. But
  $\rel {\mathord{\subseteq}} \relt$ by definition of progress, and $\rawutred$
  is monotone, hence we have $t'' \utred \relt s$, which implies
  $t'' \mathrel{(\rawutred \cup \rawid)(\relt)} s$, as wished.

  Suppose now that $t = t'$; then $t \rel s'$. Because $\rel \progress \rels,
  \relt$ and $\red t {t''}$, there exists $s''$ such that $\redrtc {s'}{s''}$
  and $t'' \relt s''$. Therefore, $\redrtc s {s''}$, and $t'' \relt s''$ implies
  $t'' \mathrel{(\rawutred \cup \rawid)(\relt)} s''$, as required.

  For the passive value clause, we have $t = v$, and we want to find $w$ such
  that $\redrtc s w$ and $v \testval \rels w$. Because $t$ cannot reduce, we
  have $t'=v$, and therefore $v \rel s'$. Because $\rel \pprogress \rels,
  \relt$, there exists $w$ such that $\redrtc {s'} w$ and $v \testval \rels
  w$. But we also have $\redrtc s w$, hence the result holds. The reasoning for
  the active open stuck term clause is the same, but with~$\relt$ instead of
  $\rels$.
\end{exa}

Like in Madiot et al.~\cite{Madiot-al:CONCUR14}, we say that a set of functions
$\setF$ is compatible if each function~$f$ in $\setF$ evolves towards functions
generated from $\setF$. However, in contrast with Madiot et al., we need some
restrictions on the resulting combinations, which depend on whether $f$ is a
strong up-to technique or not.

\begin{defi}%[Diacritical compatibility]
  \label{def:diac-comp}
  A set $\setF$ of continuous functions is \emph{diacritically compatible} if
  $\setF$ has a subset~$\setS$ such that
  \begin{itemize}
  \item for all $f \in \setS$, we have $f \sevolve {\fid\setS}^\omega,
    {\fid\setF}^\omega$;
  \item for all $f \in \setF$, we have $f \fevolve {\fid\setS}^\omega \comp
    \fid\setF \comp {\fid\setS}^\omega, {\fid\setF}^\omega$.
  \end{itemize}
\end{defi}

\noindent
The (possibly empty) subset $\setS$ represents the strong up-to techniques of
$\setF$, and a function $f$ may evolve differently depending on whether $f$
belongs to $\setS$ or not. In both cases, $f$ may evolve towards any function
generated from $\setF$ after an active clause. The difference is after passive
clauses, where a function in $\setS$ has to evolve towards a function generated
from $\setS$ only, while composing with a non-strong function is allowed if
$f \notin \setS$. In the latter case, we progress from $f(\rel)$ with $f$ not
strong, and we expect to progress towards a combination which still
includes~$f$. It is safe to do so, as long as $f$ (or in fact, any non-strong
function in $\setF$) is used at most once.

If $\setS_1$ and~$\setS_2$ are subsets of~$\setF$ which verify the conditions of
Definition~\ref{def:diac-comp}, then $\setS_1 \cup \setS_2$ also does, so there
exists the largest subset of $\setF$ which satisfies the conditions, written
$\strong \setF$. The next lemma shows that being in a compatible set is a
sufficient criterion to be a (possibly strong) up-to technique. In practice,
proving that $f$ is in a compatible set~$\setF$ is easier than proving directly
it is an up-to technique.

\begin{lem}%[Properties of diacritical compatibility]
  \label{l:properties-compatibility-better}
  Let $\setF$ be a diacritically compatible set.
  \begin{itemize}
  \item If
    $\rel \pprogress {\widehat{\mathsf{strong}(\setF)}}^\omega(\rel),
    {\fid\setF}^\omega(\rel)$, then ${\fid\setF}^\omega(\rel)$ is a
    bisimulation.
  \item any function generated from $\setF$ is an up-to technique, and any
    function generated from $\strong\setF$ is a strong up-to technique.
  \item For all $f \in \setF$, we have $f(\bisim) \mathop\subseteq \bisim$.
  \end{itemize}
\end{lem}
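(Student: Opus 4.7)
The plan is to prove~(1) in detail and then obtain~(2) and~(3) from it by straightforward weakening of the hypothesis. For~(2), if $\rel \pprogress \rel, f(\rel)$ with $f$ generated from $\setF$, monotonicity of $\pprogress$ in its two target slots widens this to $\rel \pprogress {\widehat{\strong\setF}}^\omega(\rel), {\fid\setF}^\omega(\rel)$, so~(1) makes ${\fid\setF}^\omega(\rel)$ a bisimulation and hence $\rel \subseteq \bisim$; the strong case proceeds identically, using $f(\rel) \subseteq {\widehat{\strong\setF}}^\omega(\rel)$ when $f$ is generated from $\strong\setF$. For~(3), since bisimilarity satisfies $\bisim \pprogress \bisim, \bisim$, monotonicity widens this to $\bisim \pprogress {\widehat{\strong\setF}}^\omega(\bisim), {\fid\setF}^\omega(\bisim)$, so~(1) yields ${\fid\setF}^\omega(\bisim) \subseteq \bisim$, and therefore $f(\bisim) \subseteq \bisim$ for every $f \in \setF$.

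For~(1), set $\rels = {\widehat{\strong\setF}}^\omega(\rel)$, $\relt = {\fid\setF}^\omega(\rel)$, and $\setS = \strong\setF$; the goal is $\relt \pprogress \relt, \relt$. First, I would show by structural induction that every~$g$ generated from $\strong\setF$ satisfies $g \sevolve {\fid\setS}^\omega, {\fid\setF}^\omega$. The base cases $g = \rawid$ and $g \in \strong\setF$ come from the first bullet of Definition~\ref{def:diac-comp}; the constructors $\cup$, $\comp$, and $\cdot^\omega$ go through because strong evolution composes cleanly and because ${\fid\setS}^\omega$ and ${\fid\setF}^\omega$ are idempotent by Lemma~\ref{l:continuity}. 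Instantiating at $g = {\widehat{\strong\setF}}^\omega$ and applying its strong evolution to the hypothesis $\rel \pprogress \rels, \relt$ then yields the key intermediate fact $\rels \pprogress \rels, \relt$.

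Second, starting from $\rels \pprogress \rels, \relt$, I would prove by induction on~$f$ generated from~$\setF$ that $f(\rels) \pprogress \relt, \relt$. The base cases are immediate for $f = \rawid$ and, via the first stage applied to~$\rels$, for $f \in \strong\setF$. For $f \in \setF \setminus \strong\setF$, the second bullet of Definition~\ref{def:diac-comp} applied to $\rels \pprogress \rels, \relt$ gives $f(\rels) \pprogress ({\fid\setS}^\omega \comp \fid\setF \comp {\fid\setS}^\omega)(\rels), {\fid\setF}^\omega(\relt)$, and both targets collapse into~$\relt$ via the identities ${\fid\setS}^\omega(\rels) = \rels$, $\fid\setF(\rels) \subseteq {\fid\setF}^\omega(\rels) = \relt$, and ${\fid\setS}^\omega(\relt) \subseteq \relt$, all consequences of idempotence of $\cdot^\omega$-closures and the inclusion $\setS \subseteq \setF$. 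Union and $\cdot^\omega$ are routine. Once the claim holds for every generated~$f$, taking unions gives ${\fid\setF}^\omega(\rels) \pprogress \relt, \relt$, and a sandwich of inclusions based on idempotence of ${\fid\setF}^\omega$ identifies ${\fid\setF}^\omega(\rels)$ with~$\relt$, completing~(1).

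The main obstacle is the composition case $f = f_1 \comp f_2$ in the second induction: the inductive hypothesis yields $f_2(\rels) \pprogress \relt, \relt$, but applying the regular evolution of~$f_1$ to $f_2(\rels)$ demands an input of the form $\rel'' \pprogress \rel'', \relt$, which is strictly stronger than $\rel'' \pprogress \relt, \relt$ and cannot be obtained by monotonicity (which may only enlarge target slots). To circumvent this I would refine the induction via a normal-form decomposition: every generated~$f$ can be rewritten as a union of alternating compositions $g_0 \comp h_1 \comp g_1 \comp \cdots \comp h_n \comp g_n$ with each $g_i$ generated from $\strong\setF$ and each $h_j \in \setF \setminus \strong\setF$, and I would then induct on~$n$, absorbing each $g_i$ into the strong closure of the running relation so that the invariant $\rel'' \pprogress \rel'', \relt$ is preserved, and consuming each $h_j$ via the base case above. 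This decomposition mirrors the ``at most one non-strong use between strong closures'' intuition encoded in the second bullet of Definition~\ref{def:diac-comp}.
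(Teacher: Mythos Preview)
The paper does not give a proof of this lemma in the text: the statement is recalled from the authors' earlier work~\cite{Aristizabal-al:FSCD16} and the full argument is deferred to the Coq development. So there is no in-paper proof to compare against, and I assess your proposal on its own merits.

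Your overall architecture is right. Deriving~(2) and~(3) from~(1) via monotonicity of the target slots of $\pprogress$ is correct, and your first stage for~(1)---showing that the strong closure $\rels = {\widehat{\strong\setF}}^\omega(\rel)$ satisfies $\rels \pprogress \rels, \relt$ by lifting strong evolution through unions, compositions and $\omega$---is exactly the right move and goes through using the idempotence facts of Lemma~\ref{l:continuity}. You have also correctly identified the genuine obstacle: regular evolution $f \fevolve g, h$ only fires on inputs of the shape $\rel'' \pprogress \rel'', \relt$, and this shape is lost once a non-strong function has been applied.

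The one point that needs tightening is the ``normal-form decomposition'' you invoke to circumvent the obstacle. As stated---\emph{every} generated $f$ is a union of alternating compositions $g_0 \comp h_1 \comp \cdots \comp h_n \comp g_n$---the claim is delicate: composition does not distribute over union on the right for merely continuous functions, so pushing unions outward through an arbitrary expression built with $\comp$ and $\cdot^\omega$ is not automatic. Fortunately you do not need that generality. It suffices to stratify the one function that matters, ${\fid\setF}^\omega$, via the identity
\[
{\fid\setF}^\omega \;=\; \bigcup_{n \ge 0} \bigl({\fid\setS}^\omega \comp \fid\setF\bigr)^n \comp {\fid\setS}^\omega,
\]
which holds since ${\fid\setS}^\omega \supseteq \rawid$ and ${\fid\setS}^\omega, \fid\setF \subseteq {\fid\setF}^\omega$. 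Setting $\rels_0 = \rels$ and $\rels_{n+1} = {\fid\setS}^\omega\bigl(\fid\setF(\rels_n)\bigr)$, one then shows by induction on~$n$ that $\rels_n \pprogress \rels_n, \relt$: the inductive step applies the second bullet of Definition~\ref{def:diac-comp} to each $h \in \setF$ at $\rels_n$ (giving $h(\rels_n) \pprogress \rels_{n+1}, \relt$), takes the union over~$h$ together with $\rawid$, and then closes under ${\fid\setS}^\omega$ using the strong-evolution fact from your first stage. Since $\bigcup_n \rels_n = \relt$, the conclusion $\relt \pprogress \relt, \relt$ follows. This is precisely the ``absorbing $g_i$ into the strong closure and consuming $h_j$'' intuition you describe, just organised so that after each non-strong step you pass to the \emph{full} strong closure of $\fid\setF(\rels_n)$ rather than merely to the image of a single alternating word; that is what restores the invariant $\rel'' \pprogress \rel'', \relt$.
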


\noindent The second point implies that combining functions from a compatible
set using union, composition, or $\cdot^\omega$ produces up-to techniques. In
particular, if $f \in \setF$, then $f$ is an up-to technique, and similarly, if
$f \in \strong\setF$, then $f$ is a strong up-to technique. The last item states
that bisimilarity respects compatible functions, so proving that up to context
is compatible implies that bisimilarity is preserved by contexts.

The first item suggests a more flexible notion of up-to technique, as it shows
that given a compatible set $\setF$, a relation may progress towards different
functions $f$ and $g$, $\rel \pprogress f(\rel), g(\rel)$, and still be included
in the bisimilarity as long as $f$ is generated from $\strong\setF$ and $g$ is
generated from $\setF$. In what follows, we rely on that property in examples
and say that in that case, $\rel$ is a bisimulation up to $\setF$, or $\rel$ is
a bisimulation up to $f_1 \ldots f_n$ if $\setF = \{ f_1 \ldots f_n \}$.

\begin{exa}
  The set $\setF \defeq \{ \rawutred \}$ is diacritically compatible, with
  $\strong \setF = \{ \rawutred \}$. Indeed, we prove in
  Example~\ref{ex:red-sevolve} that $\rawutred$ strongly progresses towards
  combinations of $\rawutred$ and $\rawid$ which respects the conditions of
  Definition~\ref{def:diac-comp}. As a consequence, $\rawutred$ is a strong
  up-to technique by Lemma~\ref{l:properties-compatibility-better}, and the
  bisimilarity is preserved by reduction: if $\redrtc t {t'}$, $\redrtc s {s'}$,
  and $t' \bisim s'$, then $t \bisim s$.

  Similarly, $\{ \rawrefl \}$ is diacritically compatible and $\rawrefl$ is
  strong, and $\utrefl \bisim \mathop\subseteq \bisim$ implies that two equal
  terms are bisimilar.
\end{exa}

\begin{exa}%
  \label{ex:wads-refl-red}
  We can simplify the definition of $\rel$ in Example~\ref{ex:wads} to just
  \[\mathord{\rel} \defeq \{ (\lam x x, \wads), (y, \lam x {y \iapp (\app \wads
      x)}) \mid y \mbox{ fresh} \}\] and show that $\rel$ is a bisimulation up
  to $\rawrefl$ and $\rawutred$.
\end{exa}

%%%%%%%%%%%%%%%%%%%%%%%%%%
\subsection{Up to context for normal-form bisimilarity}%
\label{ss:lamcal-up-to-nf}

Our primary goal in this section is to prove that reasoning up to context is an
up-to technique. We let $C$ range over contexts, i.e., terms with a
hole~$\mtevctx$, and define reasoning up to context as follows.
\begin{mathpar}
  \inferrule{t \rel s}
  {\inctx C t \utctx\rel \inctx C s}
\end{mathpar}
Unlike $\rawutred$ or $\rawrefl$ in the previous section, we cannot prove up to
context is an up-to technique by itself. We need some extra techniques, but we
also decompose $\rawutctx$ into smaller techniques, to allow for a finer-grained
distinction between strong and regular up-to techniques.

Figure~\ref{fig:upto-lambda} presents the techniques we consider for the
$\lambda$-calculus. The substitutive closure $\rawutsubst$ is already used in
previous works~\cite{Lassen:MFPS99,Lassen:LICS06,Biernacki-al:HAL15}. The
closure by evaluation contexts $\rawutectx$ is more unconventional, although we
define it in a previous work~\cite{Biernacki-al:HAL15}. It is not the same as
reasoning up to context, since we can factor out different contexts, as long as
they are related when we plug a fresh variable inside them. It is reminiscent of
$\star$-bisimilarity~\cite{Aristizabal-al:FSCD16} which can also factor out
different contexts in its up-to techniques, except that $\star$-bisimilarity
compares contexts with values and not simply variables.

We first show that we can indeed build $\rawutctx$ out of the techniques of
Figure~\ref{fig:upto-lambda}. Closure w.r.t. $\lambda$-abstraction is achieved
through $\rawlam$, and closure w.r.t.\ variables is a consequence of $\rawrefl$,
as we have $x \utrefl \rel x$ for all $x$. We can construct an application out
of $\rawutectx$ and $\rawrefl$.
\begin{lem}%
  \label{l:app}
  If $t \rel t'$ and $s \rel s'$, then $\app t s \mathrel{(\rawutectx \comp
    (\rawid \cup \rawutectx \comp (\rawid \cup \rawrefl)))(\rel)} \app
  {t'}{s'}$.
\end{lem}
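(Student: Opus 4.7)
The plan is to apply the evaluation-context closure $\rawutectx$ twice, peeling off one argument of the application at each stage, with $\rawrefl$ supplying the single reflexive side-condition that arises at the innermost level. Concretely, I would decompose the two applications through outer evaluation contexts $E_1 \defeq \app \hole s$ and $E_1' \defeq \app \hole {s'}$, so that $\app t s = \inctx{E_1}{t}$ and $\app{t'}{s'} = \inctx{E_1'}{t'}$. To fire the outermost $\rawutectx$, the inner relation $(\rawid \cup \rawutectx \comp (\rawid \cup \rawrefl))(\rel)$ must (a) relate the plugged terms $t$ and $t'$ and (b) satisfy the test $E_1 \testevctx{} E_1'$. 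Part (a) is immediate through the $\rawid$ summand, using the hypothesis $t \rel t'$.

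Part (b) unfolds, for a fresh $x$, to $\app x s$ being related to $\app x {s'}$ by the same inner relation. Since $\rel$ alone does not give this, I would take the other summand $\rawutectx \comp (\rawid \cup \rawrefl)$ and apply $\rawutectx$ a second time. This time I choose the same context $E_2 \defeq \app x \hole$ on both sides, so that $\app x s = \inctx{E_2}{s}$ and $\app x {s'} = \inctx{E_2}{s'}$. The plugged-term obligation that $s$ and $s'$ be related by $(\rawid \cup \rawrefl)(\rel)$ is discharged by the $\rawid$ summand, directly from the hypothesis $s \rel s'$. The context-test obligation, which demands $\app x y \mathrel{(\rawid \cup \rawrefl)(\rel)} \app x y$ for a fresh $y$, is provided by the $\rawrefl$ summand, which precisely manufactures reflexive pairs on arbitrary terms.

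There is no real obstacle in this lemma beyond the bookkeeping: each subterm $t,t'$ and $s,s'$ is placed in position by one application of $\rawutectx$, and the single $\rawrefl$ handles the inner context-test that neither $\rel$ nor $\rawid$ can discharge. The precise shape of the composition in the statement records exactly these three ingredients---two context-closures and one reflexivity---with the two $\rawid$ summands marking the points where a hypothesis from $\rel$ is consumed directly rather than processed further. The only subtlety worth flagging, which will matter elsewhere in the paper, is that $\rawutectx$ is genuinely needed on the outside: we cannot plug $t$ and $t'$ into the \emph{same} context $\app \hole s$, since we have no reason to expect $s = s'$.
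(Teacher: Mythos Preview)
Your proposal is correct and matches the paper's own proof essentially step for step: the paper also uses $\rawutectx$ twice, first with the inner context $\app x \hole$ (context test discharged by $\rawrefl$, plugged terms by $s \rel s'$ via $\rawid$) and then with the outer contexts $\app \hole s$ and $\app \hole {s'}$ (context test discharged by the inner step, plugged terms by $t \rel t'$ via $\rawid$). The only cosmetic difference is that the paper presents the argument inside-out, building from the innermost $\rawutectx$ outward, whereas you work outside-in.
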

\noindent Let $x$ be a fresh variable; then $\app x \hole \testevctx{\utrefl
  \rel} \app x \hole$. Combined with $s \rel s'$, it implies $\app x s \utectx
          {(\rawid \cup \rawrefl)(\rel)} \app x {s'}$, i.e., $\app \hole s
          \testevctx{\utectx {(\rawid \cup \rawrefl)(\rel)}} \app \hole
                    {s'}$. This combined with $t \rel t'$ using $\rawutectx$
                    gives the result of Lemma~\ref{l:app}. In the end, if we
                    define $\rawapp$ as the combination of techniques of
                    Lemma~\ref{l:app}, then the following holds.

\begin{lem}%
  \label{l:utctx-carac}
  $t \utctx\rel s$ iff $t \mathrel{{(\rawrefl \cup \rawlam \cup \rawapp)}^\omega}
  s$.
\end{lem}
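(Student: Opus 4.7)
The plan is to prove the two inclusions separately, writing $f \defeq \rawrefl \cup \rawlam \cup \rawapp$ for brevity.

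For the direction $\rawutctx(\rel) \subseteq f^\omega(\rel)$, I proceed by structural induction on the context $C$. The hole case is immediate since $f^0 = \rawid$, so $(t, s) \in \rel$ is already in $f^0(\rel)$. For $C = \lam x {C'}$, the inductive hypothesis places $(\inctx{C'} t, \inctx{C'} s)$ in some $f^n(\rel)$, and one application of $\rawlam$ yields membership in $f^{n+1}(\rel)$. For $C = \app{C'}{t_1}$ and symmetrically $C = \app{t_1}{C'}$, I combine the inductive hypothesis with $(t_1, t_1) \in \rawrefl(f^n(\rel))$ and invoke $\rawapp$ as constructed in Lemma~\ref{l:app}, consuming a bounded number of additional iterations of $f$ to land in $f^\omega(\rel)$.

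For the converse $f^\omega(\rel) \subseteq \rawutctx(\rel)$, I show by induction on $n$ that $f^n(\rel) \subseteq \rawutctx(\rel)$. The base case $n = 0$ is given by the hole context. For the inductive step, by monotonicity (which holds since each of $\rawrefl$, $\rawlam$, $\rawapp$ is continuous) it suffices to verify that $\rawutctx(\rel)$ is stable under each of $\rawrefl$, $\rawlam$, and $\rawapp$. Stability under $\rawlam$ is immediate: if $(t, s) = (\inctx C u, \inctx C v)$ with $(u, v) \in \rel$, then $(\lam x t, \lam x s) = (\inctx{\lam x C} u, \inctx{\lam x C} v) \in \rawutctx(\rel)$. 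Stability under $\rawrefl$ is likewise direct. Stability under $\rawapp$ unfolds through its definition from Lemma~\ref{l:app} as a composition involving $\rawutectx$ and $\rawrefl$, and reduces to showing that evaluation contexts around a single varying subterm can be absorbed into a single enclosing term context.

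The main obstacle is the $\rawapp$ case of the $\Leftarrow$ direction: naively, stability of $\rawutctx(\rel)$ under $\rawapp$ would require combining two independent single-hole context closures into one, which is not generally possible, since the two resulting underlying $\rel$-pairs need not coincide. The resolution exploits the asymmetric structure of $\rawapp = \rawutectx \comp (\rawid \cup \rawutectx \comp (\rawid \cup \rawrefl))$, where at each compositional layer only one factor of an application carries a genuine $\rel$-pair while the other is fixed up to reflexivity, and can therefore be merged into the surrounding context to yield a bona fide single-hole closure.
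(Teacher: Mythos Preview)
The paper does not give a standalone proof of this lemma; the text leading up to it is a sketch of the forward inclusion only (each context constructor is accounted for by one of $\rawrefl$, $\rawlam$, $\rawapp$). Your argument for $\rawutctx(\rel)\subseteq f^\omega(\rel)$ follows exactly that line and is correct; the only point you leave implicit---that the two premises of Lemma~\ref{l:app} must live in the \emph{same} relation---is harmless because $\rawid\subseteq\rawapp$, hence $f^n(\rel)\subseteq f^{n+1}(\rel)$ and the two pairs can always be pushed to a common level.

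Your backward direction, however, has a real gap. You reduce it to closure of $\rawutctx(\rel)$ under each of $\rawrefl$, $\rawlam$, $\rawapp$, but closure under $\rawrefl$ already fails: $\rawrefl(\rawutctx(\rel))$ is the full identity relation, and for a general~$\rel$ (e.g.\ $\rel=\emptyset$) no pair $(t,t)$ can be written $(\inctx C u,\inctx C v)$ with $(u,v)\in\rel$. More seriously, closure under $\rawapp$ fails even for nontrivial~$\rel$. Your proposed fix---``at each layer only one factor carries a genuine $\rel$-pair while the other is fixed up to reflexivity''---describes only the particular derivation built in the proof of Lemma~\ref{l:app}, not the whole set $\rawapp(\rawutctx(\rel))$. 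The outer $\rawutectx$ is free to relate \emph{different} contexts $E\neq E'$ whenever $(\inctx E y,\inctx{E'}y)\in\rawutctx(\rel)$: with $\rel=\{(x_1,x_2),(t_0,s_0)\}$ (all four distinct), $E=\app{x_1}{\hole}$, $E'=\app{x_2}{\hole}$, one gets $(\app{x_1}{t_0},\app{x_2}{s_0})\in\rawapp(\rawutctx(\rel))$, yet no single-hole context $C$ realizes this pair in $\rawutctx(\rel)$. So the converse inclusion is simply false for single-hole contexts; only the forward direction holds, and that is all that is used downstream to conclude that $\rawutctx$ is generated from~$\setF$ and hence is an up-to technique.
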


Proving that $\setF \defeq \{ \rawrefl, \rawlam, \rawutsubst, \rawutectx,
\rawid, \rawutred\}$ is diacritically compatible amounts to showing that each
function in $\setF$ evolves towards functions generated from~$\setF$ which
respect Definition~\ref{def:diac-comp}. Before discussing some of the
compatibility proofs, we compare ourselves to Lassen's
proof~\cite{Lassen:MFPS99}. Lassen defines a closure combining all the
techniques of Figure~\ref{fig:upto-lambda}, and then proves by induction on its
definition that it is a bisimulation, using Definition~\ref{def:nf-classic}.

A first difference is that our proofs are in a small-step style while Lassen's
is big-step, which means that he has to perform an extra induction on the number
of reduction steps. Our technique is also more modular, as we can isolate
smaller techniques and do each compatibility proof separately, instead of
reasoning on a single closure. Apart from these (minor) points, our proof and
Lassen's are quite similar, as they consist in case analyses on the possible
reductions the related terms can make. Therefore, the main difference between
our setting and Lassen's is not so much the proof technique itself, but the
notion of progress it is based upon. The crucial case is when proving
Lemma~\ref{l:app-lambda}, discussed below, where we can conclude thanks to
diacritical progress, while we do not know how to complete the proof with a
regular notion of progress (Definition~\ref{def:nf-classic}).

\medskip

\begin{figure}
\begin{mathpar}
\inferrule{ }{t \utrefl\rel t}
\and
\inferrule{t \rel s}{\lam x t \utlam\rel \lam x s}
\and
\inferrule{t \rel s \\  v \testval\rel w }
{\subst t x v \utsubst\rel \subst s x w}
\\
\inferrule{t \rel s \\ E \testevctx\rel E'}
{\inctx E t \utectx\rel \inctx {E'} {s}}
\and
\inferrule{\redrtc t {t'} \\ \redrtc s {s'} \\ t' \rel s'}
{t \utred\rel s}
\end{mathpar}
\caption{Up-to techniques for the $\lambda$-calculus}%
\label{fig:upto-lambda}
\end{figure}

We now sketch some evolution proofs, starting with the simplest ones. We already
discussed the proofs for $\rawutred$ and $\rawrefl$ in
Section~\ref{ss:lamcal-up-to-nf-general}. The strong evolution proof
for~$\rawlam$ is also quite simple.

\begin{lem}
  $\rawlam \sevolve \rawlam \cup \rawutred, \rawlam \cup \rawutred$.
\end{lem}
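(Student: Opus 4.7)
The plan is to unfold the definition of strong evolution: assume $\rel \pprogress \rels, \relt$ (so in particular $\rel \subseteq \rels \subseteq \relt$), and establish $\utlam\rel \pprogress (\rawlam \cup \rawutred)(\rels), (\rawlam \cup \rawutred)(\relt)$. I would first dispatch the two inclusion requirements cheaply: both $\rawlam$ and $\rawutred$ are continuous (read off from their defining rules in Figure~\ref{fig:upto-lambda}), hence monotone by Lemma~\ref{l:continuity}, so $\utlam\rel \subseteq \utlam\rels \subseteq (\rawlam \cup \rawutred)(\rels)$ follows from $\rel \subseteq \rels$, and similarly for the second inclusion.

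Next I would analyse a typical pair $t \utlam\rel s$. By definition of $\rawlam$, $t = \lam x {t'}$ and $s = \lam x {s'}$ with $t' \rel s'$. The reduction clause and the open-stuck-term clause of Definition~\ref{def:progress} are vacuous, since $t$ is a value that cannot reduce and is certainly not of the form $\inctx E {\app x v}$. Only the passive value clause remains, so taking $w = s$ (which is itself already a value, giving $\redto s s$), it suffices to establish $(\lam x {t'}) \iapp y \mathrel{(\rawlam \cup \rawutred)(\rels)} (\lam x {s'}) \iapp y$ for a fresh $y$.

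The key observation is that both sides $\beta$-reduce in one step, to $\subst {t'} x y$ and $\subst {s'} x y$ respectively. Choosing the bound $x$ fresh via $\alpha$-conversion, the hypothesis $t' \rel s'$ yields $\subst {t'} x y \rel \subst {s'} x y$, hence, since $\rel \subseteq \rels$, also $\subst {t'} x y \rels \subst {s'} x y$. Applying $\rawutred$ to the two one-step $\beta$-reductions then gives $(\lam x {t'}) \iapp y \utred\rels (\lam x {s'}) \iapp y$, as required.

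The only mild subtlety is the implicit $\alpha$-renaming step used to transport $t' \rel s'$ across the substitutions, which relies on $\rel$ being closed under $\alpha$-equivalence — this is standard in the informal presentation and corresponds to an explicit renaming lemma in the Coq development. Apart from that, there is no real obstacle: the occurrence of $\rawutred$ in the target of the evolution is precisely what absorbs the $\beta$-step triggered by testing values against a fresh variable, and since both $\rawlam$ and $\rawutred$ are earmarked as strong, the resulting evolution is indeed strong rather than merely regular.
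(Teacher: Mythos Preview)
Your argument is correct and follows the same approach as the paper's sketch: only the passive value clause is nontrivial, and it is discharged via $\rawutred$ after one $\beta$-step on each side. The paper avoids your renaming detour by taking the bound variable $x$ itself as the fresh testing variable (legitimate since $x \notin \fv{\lam x {t'}} \cup \fv{\lam x {s'}}$), so that $(\lam x {t'}) \iapp x \rawred t'$ and $(\lam x {s'}) \iapp x \rawred s'$ land directly on the pair $t' \rel s' \subseteq \rels$ without any appeal to closure of $\rel$ under renaming.
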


\begin{proof}[Sketch]
Let $\rel \pprogress \rels, \relt$; we want to prove that
$\utlam\rel \pprogress \utlam\rels \cup \utred\rels, \utlam\relt \cup
\utred\relt$.  Let $\lam x t \utlam\rel \lam x s$ such that $t \rel s$. The only
clause to check is the one for values: we have $(\lam x t) \iapp x \rawred t$
and $(\lam x s) \iapp x \rawred s$, i.e., $(\lam x t) \iapp x \utred\rel (\lam x
s) \iapp x$, which implies $(\lam x t) \iapp x \utred\rels (\lam x s) \iapp x$
because $\rel \mathord{\subseteq} \rels$ and $\rawutred$ is monotone.
\end{proof}

The technique $\rawutsubst$ is also strong, as we can show the following result.

\begin{lem}%
  \label{lem:subst}
  $\rawutsubst \sevolve \rawutsubst, (\rawid \cup \rawutectx) \comp \rawutsubst
  \comp (\rawid \cup \rawutsubst)$.
\end{lem}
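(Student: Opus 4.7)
The plan is to fix an arbitrary $\rel \pprogress \rels, \relt$ and prove $\utsubst\rel \pprogress \utsubst\rels, ((\rawid \cup \rawutectx) \comp \rawutsubst \comp (\rawid \cup \rawutsubst))(\relt)$. The two inclusions demanded by $\pprogress$ reduce to monotonicity of $\rawutsubst$ (applied to $\rel \subseteq \rels$ for the first, and to $\rels \subseteq \relt$ together with the identity branches of $(\rawid \cup \rawutsubst)$ and $(\rawid \cup \rawutectx)$ for the second). The real work is to check the progress conditions on an arbitrary pair $t \utsubst\rel s$ witnessed by $t = \subst{t_0}{x}{v}$, $s = \subst{s_0}{x}{w}$, $t_0 \rel s_0$, and $v \testval\rel w$; I would do so by case analysis on the shape of $t_0$.

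The passive value clause is the easy one: if $t$ is a value then so is $t_0$, and the $\rel$-progress of $t_0$ yields a value $w_0$ with $\redto{s_0}{w_0}$ and $\app{t_0}{z} \rels \app{w_0}{z}$ for fresh $z$. Because value substitution commutes with reduction we have $\redrtc{s}{\subst{w_0}{x}{w}}$, and a single application of $\rawutsubst$ to the relations $\app{t_0}{z} \rels \app{w_0}{z}$ and $v \testval\rels w$ produces $\app{t}{z} \utsubst\rels \app{\subst{w_0}{x}{w}}{z}$, i.e., $t \testval{\utsubst\rels} \subst{w_0}{x}{w}$, which is exactly the passive target $\rawutsubst$.

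The active reduction clause splits on whether $t_0$ is already a normal form. If $\red{t_0}{t_0'}$, the value substitution lets the reduction propagate, the $\relt$-progress of $t_0$ provides $s_0'$ with $\redrtc{s_0}{s_0'}$ and $t_0' \relt s_0'$, and the conclusion $\subst{t_0'}{x}{v} \utsubst\relt \subst{s_0'}{x}{w}$ sits inside the target. Otherwise $t_0 = \inctx{E_0}{\app{x}{v_0}}$ is open-stuck with head $x$ and $v = \lam{y}{t_1}$, uncovering a $\beta$-redex in $t$. The $\testopen$-progress of $t_0$ gives data which, after substituting $v$ and $w$, becomes $\subst{E_0}{x}{v} \testevctx{\utsubst\relt} \subst{E_0'}{x}{w}$ and $\subst{v_0}{x}{v} \testval{\utsubst\relt} \subst{w_0}{x}{w}$; concurrently the $\rel$-progress of $\app{v}{z} \rel \app{w}{z}$ on the reduction step $\app{v}{z} \rawred \subst{t_1}{y}{z}$ yields some $s_1'$ with $\redrtc{\app{w}{z}}{s_1'}$ and $\subst{t_1}{y}{z} \relt s_1'$. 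I would combine these by applying $\rawutsubst$ over the relation $(\rawid \cup \rawutsubst)(\relt)$---the $\rawid$-branch carries $\subst{t_1}{y}{z} \relt s_1'$ while the $\rawutsubst$-branch carries the substituted tested values---substituting $\subst{v_0}{x}{v}$ and $\subst{w_0}{x}{w}$ for $z$, and then plugging the result into the contexts $\subst{E_0}{x}{v}$ and $\subst{E_0'}{x}{w}$ via $\rawutectx$, landing $t'$ and the matching $s'$ inside $(\rawutectx \comp \rawutsubst \comp (\rawid \cup \rawutsubst))(\relt)$, a subset of the target. The active open-stuck clause follows the same pattern: if the head of $t_0$ differs from $x$, one $\rawutsubst$-application on the $\testopen$-components suffices; if the head is $x$ and $v$ is a variable $a$, I would instead use the $\rel$-progress of $\app{a}{z} \rel \app{w}{z}$ as an open-stuck step to produce $\inctx{E''}{\app{a}{w''}}$ and then proceed exactly as in the $\beta$-case.

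The main obstacle, and the reason the target of the evolution is the compound function $(\rawid \cup \rawutectx) \comp \rawutsubst \comp (\rawid \cup \rawutsubst)$ rather than a single $\rawutsubst$, is that two independent pieces of progress---that of $t_0 \rel s_0$ and that of $v \testval\rel w$---must be chained through substitution and, on the $s$-side, through a residual evaluation context arising from the reduction or stuckness of $\app{w}{z}$ happening strictly below the top of $s$. The inner $(\rawid \cup \rawutsubst)$ absorbs the doubled substitutive layer this produces, the middle $\rawutsubst$ handles the outermost substitution $x \mapsto v, w$, and the outer $(\rawid \cup \rawutectx)$ accommodates the residual $s$-side context.
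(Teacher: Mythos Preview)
Your proposal is correct and follows essentially the same approach as the paper: a case analysis on the shape of $t_0$ (value, reducible, open-stuck with head $\neq x$, open-stuck with head $x$ and $v$ a $\lambda$-abstraction, open-stuck with head $x$ and $v$ a variable), using the progress of $t_0 \rel s_0$ together with the progress of $\app v z \rel \app w z$ in the last two sub-cases, and combining the resulting data via $\rawutsubst$ (possibly nested) and then $\rawutectx$ for the outer contexts. The only difference is organizational---you group the analysis by which progress clause of $\utsubst\rel$ is being discharged, whereas the paper groups it by the shape of $t_0$---but the cases and the way each lands in $(\rawid \cup \rawutectx)\comp\rawutsubst\comp(\rawid\cup\rawutsubst)(\relt)$ are identical.
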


\begin{proof}[Sketch]
Let $\rel \pprogress \rels, \relt$, and $\subst t x v
\utsubst\rel \subst s x w$ such that $t \rel s$ and $v \testval\rel w$. We check
the different clauses by case analysis on $t$. If $t$ is a value, then there
exists a value $s'$ such that $\redto s {s'}$ and $t \testval\rel s'$. But then
$\subst t x v$ and $\subst {s'} x w$ are also values, and then we can prove that
$\subst t x v \testval{\utsubst\rels} \subst {s'} x w$ holds. If $t \rawred t'$,
then there exists $s'$ such that $\redrtc s {s'}$ and $t' \relt s'$. Then
$\subst t x v \rawred \subst {t'} x v$, $\redrtc {\subst s x w} {\subst {s'} x
  w}$, and $\subst {t'} x v \utsubst \relt \subst {s'} x w$.

Finally, if $t=\inctx E {y \iapp v'}$, then there exists $s'$ such that
$\redto s {s'}$ and $t \testopen\relt s'$. If $y \neq x$, then $\subst t x v$
and $\subst {s'} x w$ are open stuck terms in
$\testopen{\utsubst\relt}$. Otherwise, we distinguish cases based on whether $v$
is a $\lambda$-abstraction or not. In the former case, let $v = \lam z {t'}$,
$s' = \inctx {E'}{x \iapp w'}$. Then
$\subst t x v = \inctx {\subst E x v}{v \iapp \subst {v'} x v} \rawred \inctx
{\subst E x v}{\subst {t'} z {\subst {v'} x v} }$.  From $v \testval\rel w$ and
$v \iapp z \rawred t'$, we know that there exists $s''$ such that
$\redrtc {w \iapp z}{s''}$ and $t' \relt s''$.  Consequently, we have
$\redrtc{{\redrtc {\subst s x w}{\subst {s'} x w}}={\inctx {\subst {E'} x w}{w
      \iapp \subst{w'} x w }}}{\inctx {\subst {E'} x w}{\subst {s''} z
    {\subst{w'} x w}}}$.  Then
$\inctx {\subst E x v}{x'} \utsubst\relt \inctx {\subst E x w}{x'}$ for a fresh
$x'$, but also
$\subst {t'} z {\subst {v'} x v} \utsubst{\utsubst \relt} \subst {s''} z
{\subst{w'} x w}$, so after plugging, we obtain terms in
$\rawutectx \comp \rawutsubst \comp (\rawid \cup \rawutsubst)(\relt) $.

If $v$ is a variable, a similar reasoning shows that $\subst t x v$ and
$\subst {s'} x w$ evaluate to open stuck terms, whose contexts are related by
$\rawutectx \comp \rawutsubst \comp (\rawid \cup \rawutsubst)(\relt) $ and whose
arguments are related by $\utsubst{\utsubst \relt}$.
\end{proof}

The proof for $\rawutsubst$ does not require the clause for values to be
passive, and is thus similar to the corresponding subcase in Lassen's proof by
induction~\cite{Lassen:MFPS99}. In contrast, we need testing values to be
passive when dealing with $\rawutectx$; we present the problematic subcase
below. This is where our proof technique differs from Lassen's, as we do not
know how to make this subcase go through in a proof by induction.

\begin{lem}%
  \label{l:app-lambda}
  $\rawutectx \evolve \rawutectx, \rawutsubst \cup \rawutectx \cup
  (\rawid \cup \rawutectx) \comp \rawutsubst \comp (\rawid \cup \rawutsubst)$.
\end{lem}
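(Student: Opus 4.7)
The plan is to fix a pair $(\inctx E t, \inctx{E'} s) \in \utectx\rel$ witnessed by $t \rel s$ and $E \testevctx\rel E'$, and verify each progress clause by case analysis on the shape of $\inctx E t$ and its possible reductions. The required inclusions $\utectx\rel \subseteq \utectx\rel$ and $\utectx\rel \subseteq \mathcal{F}(\relt)$, where $\mathcal{F}$ abbreviates the compound active target, are immediate by monotonicity of $\rawutectx$ and $\rel \subseteq \relt$. Two easy cases come first: if $t \rawred t_1$, the outer reduction $\inctx E t \rawred \inctx E {t_1}$ is matched via the active-reduction progress of $t \rel s$, landing the continuation pair in $\utectx\relt$; and if $t$ is itself an open stuck term, the active open-stuck clause combined with context composition places the result in $\utectx\relt$. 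Both fall into the $\rawutectx$ summand of $\mathcal{F}(\relt)$ directly.

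The decisive case, and the one I expect to be the main obstacle, is the sub-case in which $t = v = \lam x t_1$ is a $\lambda$-abstraction and $E = \inctx{E_1}{\hole \iapp v_1}$, so that $\inctx E v$ reduces by firing the boundary $\beta$-redex $(\lam x t_1) \iapp v_1 \rawred \subst{t_1} x v_1$. The plan is to collect three pieces of progress information: (i) from $\inctx{E_1}{y \iapp v_1} \rel \inctx{E'} y$ (for fresh $y$, obtained by unfolding $E \testevctx\rel E'$) and the active open-stuck clause, one gets $\inctx{E'} y \redrtc \inctx{E_2}{y \iapp w_1}$ with $E_1 \testevctx\relt E_2$ and $v_1 \testval\relt w_1$; (ii) from $t \rel s$ and the passive value clause, $s \redto w$ with $v \iapp z \rel w \iapp z$ for fresh $z$; and (iii) from $v \iapp z \rel w \iapp z$ and the active reduction clause applied to the $\beta$-redex $v \iapp z \rawred \subst{t_1} x z$, one gets $w \iapp z \redrtc u$ with $\subst{t_1} x z \relt u$. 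The crux is that step (ii) delivers its pair into $\rel$, not merely into $\relt$, which is exactly what makes step (iii) available; this is the work done by the passivity of the value clause, and is what prevents the argument from going through in an inductive proof in the style of Lassen.

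From these ingredients, substitutivity of reduction traces the right-hand side as $\inctx{E'} s \redrtc \inctx{E'} w \redrtc \inctx{E_2}{w \iapp w_1} \redrtc \inctx{E_2}{\subst u z w_1}$, while the left-hand $\beta$-step produces $\inctx{E_1}{\subst{t_1} x v_1}$, which equals $\inctx{E_1}{\subst{(\subst{t_1} x z)} z v_1}$ by freshness of $z$. Applying $\rawutsubst$ to (iii) with the witness $v_1 \testval\relt w_1$ from (i) lands the inner-term pair in $\rawutsubst(\relt \cup \utsubst\relt)$; the context equation $E_1 \testevctx\relt E_2$ is promoted into the same substitution-closed relation via a trivial substitution over a fresh variable $x^*$ not free in either context, again using $v_1, w_1$ as testing witnesses. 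A final application of $\rawutectx$ assembles the whole pair in $\rawutectx(\rawutsubst(\relt \cup \utsubst\relt))$, matching the $(\rawid \cup \rawutectx) \comp \rawutsubst \comp (\rawid \cup \rawutsubst)$ summand of $\mathcal{F}(\relt)$. The remaining subcases (reduction inside $E$ that does not consume $v$; the symmetric boundary redex where the function comes from $E$) follow the same template with fewer ingredients, typically landing in the $\rawutsubst$ summand alone via an analogous substitution-and-progress argument on $\inctx E y \rel \inctx{E'} y$.
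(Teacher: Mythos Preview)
Your argument for the $\lambda$-abstraction boundary case is correct and your explanation of why passivity matters (step~(ii) must return a pair in $\rel$, not merely in $\relt$, so that step~(iii) can fire) is exactly the paper's point. But your route differs from the paper's, and your case enumeration misses precisely the subcase the paper identifies as the obstacle.

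The paper does not single out the $\lambda$-abstraction case; it singles out $t = x$ a \emph{variable} with $E = \app{\hole}{w}$, so that $\inctx{E}{t}$ is an open stuck term rather than a redex. More importantly, the paper does not unpack any ``$t$ is a value'' subcase directly: from the passive value clause it obtains $s \redto v_2$ with $t \testval\rel v_2$, substitutes into $\inctx{E}{y} \rel \inctx{E'}{y}$ to get $\inctx{E}{t} \utsubst\rel \inctx{E'}{v_2}$, and then invokes the already-proved Lemma~\ref{lem:subst} (the strong evolution of $\rawutsubst$) as a black box. That single move discharges both your $\lambda$-abstraction case and the variable case uniformly, and it explains the shape of the active target, which is literally the active target of Lemma~\ref{lem:subst}. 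Your direct assembly via $\rawutsubst$ and $\rawutectx$ works for the redex case, but it is more laborious and does not obviously reuse prior work.

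The concern is your ``remaining subcases''. You list only internal reductions of $E$ and the symmetric boundary redex, claiming they ``typically land in the $\rawutsubst$ summand alone''. You omit the case $t$ a variable with innermost frame $\hole \iapp v_1$ (and the dual $x' \iapp \hole$ with $t$ a value), where $\inctx{E}{t}$ is open stuck. This case is not easier than your decisive case --- it is the one the paper highlights --- and it does \emph{not} land in $\rawutsubst(\relt)$ alone: it requires the full $(\rawid \cup \rawutectx) \comp \rawutsubst \comp (\rawid \cup \rawutsubst)$. Your closing phrase ``substitution-and-progress argument on $\inctx{E}{y} \rel \inctx{E'}{y}$'' is in fact the paper's argument, but carrying it out means appealing to Lemma~\ref{lem:subst}, which you never mention.
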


\begin{proof}[Sketch]
Let $\rel \pprogress \rel, \rels$, and $\inctx E t \utectx\rel
\inctx {E'} s$ such that $E \testevctx\rel E'$ and $t \rel s$.  We proceed by
case analysis on $E$ and $t$. Most cases are straightforward; the problematic
case is when~$t$ is a variable $x$ and $E = \app \hole w$. Because $t \rel s$,
there exists $v_2$ such that $\redto{s}{v_2}$ and $x \testval\rel v_2$. Because
$E \testevctx\rel E'$, we have $\inctx E y \rel \inctx {E'} y$ for a fresh $y$,
and therefore $\inctx E x \utsubst\rel \inctx {E'}{v_2}$. We can conclude using
Lemma~\ref{lem:subst}: there exists an open stuck term~$s'$ such that
$\redrtc{\redrtc{\inctx{E'}{s}}{\inctx{E'}{v_2}}}{s'}$ and $\app x w
\testopen{((\rawid \cup \rawutectx) \comp \rawutsubst \comp (\rawid \cup
  \rawutsubst)(\rels))} s'$.

In an induction proof with Definition~\ref{def:nf-classic}, we would have in
that case $x \testval\rels v_2$ instead of $\rel$, and $\redto {\inctx {E'}
  y}{s'}$ for some $s'$ such that $\app y w \testopen\rels s'$. We do not see
how to go further in the case $v_2$ is a $\lambda$-abstraction: we have to prove
that $\subst {s'} y {v_2}$ evaluates to an open stuck term, but we do not have
any progress hypothesis about $\rels$.
\end{proof}

Each technique of $\setF$ evolves towards functions generated from $\setF$
respecting Definition~\ref{def:diac-comp}: for passive clauses, $\rawutred$,
$\rawrefl$, $\rawlam$, and $\rawutsubst$ strongly evolves towards combinations
of strong techniques, while $\rawutectx$ evolves towards itself, thus respecting
the criterion of at most one not strong technique in that case. As a result, the
following theorem holds.

\begin{thm}%
  \label{t:compatible-lambda}
  The set $\setF$ is diacritically compatible, with $\strong \setF = \setF
  \setminus \{ \rawutectx \}$.
\end{thm}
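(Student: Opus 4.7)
The plan is to verify the two bullet conditions of Definition~\ref{def:diac-comp} directly, with the announced choice $\setS = \setF \setminus \{\rawutectx\}$. Three ingredients are required: continuity of every $f \in \setF$; a strong evolution of the form $f \sevolve {\fid\setS}^\omega, {\fid\setF}^\omega$ for each $f \in \setS$; and a regular evolution $\rawutectx \fevolve {\fid\setS}^\omega \comp \fid\setF \comp {\fid\setS}^\omega, {\fid\setF}^\omega$. Since strong evolution subsumes regular evolution (by taking $\rels = \rel$ in the definition of $\sevolve$), only $\rawutectx$ requires a dedicated regular-evolution argument.

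Continuity is dispatched immediately: each defining rule in Figure~\ref{fig:upto-lambda} mentions $\rel$ in only finitely many premises, so by inspection each function is determined by its image on $\finpower\rel$. All the strong evolutions have already been stated or sketched in the preceding pages, and I would simply collect them: $\rawid$ strongly evolves to itself trivially; $\rawrefl \sevolve \rawrefl, \rawrefl$ and $\rawutred \sevolve \rawutred \cup \rawid, \rawutred \cup \rawid$ (Example~\ref{ex:red-sevolve}); $\rawlam \sevolve \rawlam \cup \rawutred, \rawlam \cup \rawutred$ from the unlabelled lemma just above; and finally Lemma~\ref{lem:subst} gives
\[
  \rawutsubst \sevolve \rawutsubst,\; (\rawid \cup \rawutectx) \comp \rawutsubst \comp (\rawid \cup \rawutsubst).
\]
In each of these the passive target lies in ${\fid\setS}^\omega$ — the only syntactic point to verify — while $\rawutectx$ is tolerated in the active target of $\rawutsubst$ because active targets may use any function in $\setF$.

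For $\rawutectx$ itself, Lemma~\ref{l:app-lambda} delivers
\[
  \rawutectx \fevolve \rawutectx,\; \rawutsubst \cup \rawutectx \cup (\rawid \cup \rawutectx) \comp \rawutsubst \comp (\rawid \cup \rawutsubst),
\]
whose passive target $\rawutectx$ fits the template ${\fid\setS}^\omega \comp \fid\setF \comp {\fid\setS}^\omega$ by taking identities on either side, and whose active target clearly belongs to ${\fid\setF}^\omega$. Both bullets of Definition~\ref{def:diac-comp} are thereby discharged, giving $\setF \setminus \{\rawutectx\} \subseteq \strong\setF$; the reverse inclusion holds because the troublesome $t = x$, $E = \app \hole w$ subcase of Lemma~\ref{l:app-lambda} forces $\rawutectx$ itself to reappear in its own evolution target, so $\rawutectx$ cannot be moved into the strong subset without violating the passive-target pattern of the first bullet.

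The main obstacle is not in this collection theorem but in the underlying Lemmas~\ref{lem:subst} and~\ref{l:app-lambda}, whose subtle reliance on the passive value clause is precisely what the diacritical framework was designed to accommodate. Once those lemmas are in place, the present argument is a routine symbolic check that each of the evolution targets above respects the syntactic schemas appearing in Definition~\ref{def:diac-comp}.
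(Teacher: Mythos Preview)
Your proposal is correct and follows essentially the same approach as the paper: collect the individual evolution results for each function in $\setF$ (continuity by inspection, the strong evolutions for $\rawid$, $\rawrefl$, $\rawutred$, $\rawlam$, $\rawutsubst$ from the preceding examples and lemmas, and the regular evolution for $\rawutectx$ from Lemma~\ref{l:app-lambda}), then verify that each target fits the syntactic schema of Definition~\ref{def:diac-comp}. The paper presents this more tersely in the paragraph immediately preceding the theorem, but the substance is identical.
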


A direct consequence of Lemma~\ref{l:properties-compatibility-better} is that
any function generated from $\setF$ is an up-to technique, so in particular
$\rawutctx$ (cf
Lemma~\ref{l:utctx-carac}). Lemma~\ref{l:properties-compatibility-better} then
also implies that~$\nfbisim$ is preserved by contexts. As it is easy to show
that $\nfbisim$ is also an equivalence relation, we have the following
corollary.

\begin{cor}
  $\bisim$ is a congruence.
\end{cor}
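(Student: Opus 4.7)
The plan is to verify the two ingredients of congruence separately: that $\bisim$ is an equivalence relation, and that it is preserved by all contexts. The preservation-by-contexts part is where the compatibility machinery developed in this section does the heavy lifting; the equivalence part is largely routine.

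For the equivalence, I would establish \emph{reflexivity} by exhibiting the identity relation $\idbis$ as a normal-form bisimulation: every clause of Definition~\ref{def:progress} is witnessed trivially, since reductions match themselves and the tests $\testval{\idbis}$ and $\testopen{\idbis}$ are satisfied by plugging the same fresh variable on both sides. \emph{Symmetry} is immediate from the symmetric formulation of Definition~\ref{def:progress} (the ``converse conditions'' clause), which says that $\rel^{-1}$ is a bisimulation whenever $\rel$ is. \emph{Transitivity} is handled by checking that $\bisim \comp \bisim$ is a bisimulation, chaining the matching reductions supplied by two underlying bisimulations through a common intermediate term; the tests $\testval{\cdot}$, $\testevctx{\cdot}$, and $\testopen{\cdot}$ compose cleanly because they are defined by plugging a fresh variable and asking for a relationship on the resulting terms, and the determinism of reduction keeps the chained evaluation sequences in sync.

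For preservation by contexts, I would invoke the diacritical compatibility of $\setF$ granted by Theorem~\ref{t:compatible-lambda}: the third item of Lemma~\ref{l:properties-compatibility-better} gives $f(\bisim) \subseteq \bisim$ for every $f \in \setF$. A short induction extends this inclusion from $\setF$ to every function generated from $\setF$, treating in turn union, composition (using the monotonicity part of Lemma~\ref{l:continuity}), and $\cdot^\omega$. Lemma~\ref{l:utctx-carac} then exhibits $\rawutctx$ as such a generated function, since $\rawrefl$ and $\rawlam$ lie in $\setF$ and $\rawapp$ is itself built from $\rawutectx$, $\rawid$, and $\rawrefl$, all in $\setF$. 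Hence $\utctx{\bisim} \subseteq \bisim$, which is precisely preservation by contexts. The main obstacle is the transitivity step, where the progress clauses must be unfolded for $\bisim \comp \bisim$ and each case verified by stitching together two witnessing reductions; bookkeeping with fresh variables in the value and open-stuck-term clauses requires care but introduces no deep difficulty, and combining the equivalence with context preservation yields the congruence of $\bisim$.
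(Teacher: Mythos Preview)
Your proposal is correct and follows essentially the same approach as the paper: the paper also derives preservation by contexts from Theorem~\ref{t:compatible-lambda} via Lemma~\ref{l:properties-compatibility-better} and Lemma~\ref{l:utctx-carac}, and dismisses the equivalence part as ``easy to show''. Your write-up is simply more explicit, spelling out the equivalence-relation argument and the short induction extending $f(\bisim)\subseteq\bisim$ from $f\in\setF$ to generated functions (an alternative here is to use the first item of Lemma~\ref{l:properties-compatibility-better} directly, since $\bisim\pprogress\bisim,\bisim$ gives that ${\fid\setF}^\omega(\bisim)$ is a bisimulation and hence contained in $\bisim$).
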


\noindent
This corollary, in turn, immediately implies the soundness of $\bisim$
w.r.t.\ the usual contextual equivalence of the $\lambda$-calculus, where we
observe termination of evaluation~\cite{Abramsky-Ong:IaC93}---the notion of
contextual equivalence that we take throughout the paper. However, as proved
in~\cite{Lassen:LICS05},~$\bisim$ is not complete w.r.t.\ contextual equivalence.

\begin{exa}%
  \label{ex:wads-upto}
  With the same definitions as in Examples~\ref{ex:wads}, let
  \[ v \defeq \fix{\lam {zxy}{\app z x}} \mbox{ and } w \defeq \fix{\lam {zxy}{z
        \iapp {(\wads \iapp x)}}}.\]
  We prove these values are bisimilar by showing that
  \begin{align*}
    \mathord{\rel} & \defeq \{ (v, w), (v \iapp x, w \iapp x), (x, \lam y {x \iapp (\app
           \wads y)}) \mid x \mbox{ fresh} \}
  \end{align*}
  is a bisimulation up to $\setF$. For the first pair, we have directly $v
  \testval\rel w$. For the second pair, we have
  \[
    \app v x \rawredrtc (\lam {zxy}{\app z x}) \iapp v \iapp x \rawredrtc \lam y
    {v \iapp x} \mbox{ and } \app w x \rawredrtc (\lam {zxy}{z \iapp {(\wads
        \iapp x)}}) \iapp w \iapp x \rawredrtc \lam y {w \iapp (\wads \iapp x)}.
  \]
  We show the two resulting values are related up to $\setF$. Indeed, we have
  $v \rel w$, and $x \utred\rel \app \wads x$, because
  $\wads \iapp x \rawredrtc \lam y {x \iapp (\app \wads y)}$, therefore
  $\lam y {v \iapp x} \utlam{\utapp{(\rawid \cup \rawutred)(\rel)}} \lam y {w
    \iapp (\wads \iapp x)}$. Note that we can use $\rawapp$---which is built out
  of the not strong technique $\rawutectx$ (cf. Lemma~\ref{l:app})---in that
  case, as the reduction clause is active.

  For the last pair, we compare $x \iapp z$ and $\lam y {x \iapp (\app \wads y)}
  \iapp z$, but $\lam y {x \iapp (\app \wads y)} \iapp z \rawredrtc x \iapp \lam
  y {z \iapp (\app \wads y)}$, so we can conclude with up to reduction and
  reflexivity.
\end{exa}

%%%%%%%%%%%%%%%%%%%%%%%%%%%%%%%%%%%%%%%%%%%%%%%%%%%
\section{Delimited-control operators}%
\label{s:delcon}

We show that the results of Section~\ref{s:lamcal} seamlessly carry over to the
call-by-value $\lambda$-calculus extended with \textshift{} and
\textreset{}~\cite{Danvy-Filinski:LFP90,Biernacki-al:HAL15}, thus demonstrating
the robustness of the approach, but also improving on the previous results on
extensional normal-form bisimulations for this
calculus~\cite{Biernacki-al:HAL15}.

%%%%%%%%%%%%%%%%%%%%%%%%%%
\subsection{Syntax, semantics, and normal-form bisimulations}%
\label{ss:delcon-calculus}

We extend the grammar of terms and values given in Section~\ref{s:lamcal} as
follows:

\[
\begin{array}{rcl}
  t,s & \bnfdef & \dots \bnfor \reset{t}
  \\
  v,w & \bnfdef & \dots \bnfor \rawshift
\end{array}
\]

\vspace{2mm}\noindent where $\rawreset$ is the control delimiter \textreset{}
and $\rawshift$ is the delimited-control operator \textshift{}. Usually,
\textshift{} is presented as a binder
$\shift{x}{t}$~\cite{Danvy-Filinski:LFP90,Biernacki-al:HAL15} or as a special
form $\app{\rawshift}{t}$~\cite{Filinski:POPL94}, but here we choose a more
liberal syntax treating \textshift{} as a value (as, e.g.,
in~\cite{Kameyama:HOSC07}). This makes the calculus a little more interesting
since \textshift{} becomes a subject to $\eta$-expansion just as any other
value, and moreover it makes it possible to study terms such as
$\app{\rawshift}{\rawshift}$. We call \emph{pure terms} effect-free terms, i.e.,
values and terms of the form $\reset{t}$.

We distinguish a subclass of pure contexts ($E$) among evaluation contexts
($F$):

\[
\begin{array}{rcl}
  \pctx
  & \bnfdef &
  \mtpctx
  \bnfor \valpctx v \pctx
  \bnfor \argpctx \pctx t
  \\
  \evctx
  & \bnfdef &
  \mtevctx
  \bnfor \valevctx v \evctx
  \bnfor \argevctx \evctx t
  \bnfor \resevctx \evctx
\end{array}
\]

\vspace{2mm}\noindent We extend the function $\mathsf{fv}$ to both kinds of
contexts. Note that an evaluation context~$F$ is either pure or can be written
$\inctx{F'}{\reset{E'}}$ for some $F'$ and $E'$. Pure contexts can be captured
by $\rawshift$, as we can see in the following rules defining the call-by-value
left-to-right reduction semantics of the calculus:

\begin{align*}
  \inctx \evctx {\app {(\lam x t)} v} & \rawred
  \inctx \evctx {\subst t x v}
  \\[1mm]
  \inctx \evctx {\reset{\inctx \pctx {\app{\rawshift}{v}}}} & \rawred
  \inctx \evctx {\reset{\app{v}{\lam x {\reset {\inctx \pctx x}}}}}
  \mbox{ with } x \not\in \fv \pctx
  \\[1mm]
  \inctx \evctx {\reset v} & \rawred
  \inctx \evctx v
\end{align*}

\vspace{2mm}\noindent The first rule is the usual call-by-value
$\beta$-reduction. When $\rawshift$ is applied to a value~$v$, it captures its
surrounding pure context $\pctx$ up to the dynamically nearest enclosing reset,
and provides its term representation $\lam x {\reset {\inctx \pctx x}}$ as an
argument to $v$. Finally, a reset which encloses a value can be removed, since
the delimited subcomputation is finished.  All these reductions may occur within
a metalevel context~$\evctx$ that encodes the chosen call-by-value evaluation
strategy. As in Section~\ref{s:lamcal}, the reduction relation~$\rawred$ is
preserved by evaluation contexts.

\begin{exaC}[\cite{Biernacki-al:HAL15}]%
  \label{e:reduction}
  Let $i \defeq \lam x x$, $\omega \defeq \lam x {\app x x}$, and $\Omega \defeq \app
  \omega \omega$. We present the sequence of reductions initiated by
  $\reset {\app {(\app {(\app {\rawshift}
        {\lam{k}{\app{i}{(\app{k}{i})}}})} {(\app {\rawshift}
        {\lam{k}{\omega}})})} {\Omega}}$:

  \[
     \begin{array}{rcl}
       \reset
           {\app
             {(\app
               {(\app
                 {\rawshift}
                 {\lam{k}{\app{i}{(\app{k}{i})}}})}
               {(\app
                 {\rawshift}
                 {\lam{k}{\omega}})})}
             {\Omega}} & \rawred & \quad\quad\quad (1)
           \\[1mm]
           \reset
               {\app
                 {(\lam{k}{\app{i}{(\app{k}{i})}})}
                 {(\lam{x}
                   {\reset
                     {\app
                       {\app
                         {x}
                         {(\app
                           {\rawshift}
                           {\lam{k}{\omega}})}}
                       {\Omega}}})}} & \rawred & \quad\quad\quad (2)
               \\[1mm]
               \reset
                   {\app
                     {i}
                     {(\app
                       {(\lam{x}
                         {\reset
                           {\app
                             {\app
                               {x}
                               {(\app
                                 {\rawshift}
                                 {\lam{k}{\omega}})}}
                             {\Omega}}})}
                       {i})}} & \rawred & \quad\quad\quad (3)
                   \\[1mm]
               \reset
                   {\app
                     {i}
                     {\reset
                        {\app
                          {\app
                            {i}
                            {(\app
                              {\rawshift}
                              {\lam{k}{\omega}})}}
                          {\Omega}}}} & \rawred & \quad\quad\quad (4)
                   \\[1mm]
               \reset
                   {\app
                     {i}
                     {\reset
                       {\app
                         {(\lam{k}{\omega})}
                         {(\lam{x}
                           {\reset
                             {\app
                               {\app
                                 {i}
                                 {x}}
                               {\Omega}}})}}}} & \rawred & \quad\quad\quad (5)
                   \\[1mm]
               \reset
                   {\app
                     {i}
                     {\reset
                       {\omega}}} & \rawred & \quad\quad\quad (6)
                   \\[1mm]
               \reset
                   {\app
                     {i}
                     {\omega}} & \rawred & \quad\quad\quad (7)
                   \\[1mm]
               \reset
                   {\omega} & \rawred & \quad\quad\quad (8)
                   \\[1mm]
               \omega & &
     \end{array}
     \]

     \vspace{2mm}\noindent In step (1) the pure context $\argpctx{(\argpctx
       \hole {(\app {\rawshift} {\lam k \omega})})} {(\app \omega \omega)}$, is
     captured and reified as a term that in step (2) is substituted for $k$ in
     the argument of \textshift{}. In step (3) the captured context is
     reactivated by $\beta$-reduction, and thanks to the \textreset{} enclosing
     the body of the lambda representing the captured context, it is not
     \emph{merged} with the current context, but \emph{composed} with it. As a
     result, the capture triggered by \textshift{} in step (4) leaves the outer
     $i$ intact.~\footnote{For this reason \textshift{} and \textreset{} are
       called \textit{static} control operators, in contrast to Felleisen's
       \textcontrol{} and \textprompt{}~\cite{Felleisen:POPL88} that are called
                   \textit{dynamic} control operators~\cite{Biernacki-al:SCP06}.}
     In step (5) the context captured in step (4) is discarded, again in terms
     of $\beta$-reduction. In steps (6) and (8) a control delimiter guarding a
     value is removed, whereas in step (7) a regular function application takes
     place.  Note that even though the reduction strategy is call-by-value, some
     function arguments are not evaluated, like the non-terminating term
     $\Omega$ in this example.
\end{exaC}

\begin{exa}
  This example illustrates the operational behavior of $\rawshift$ as a value:
  \vspace{1mm}
  \begin{align*}
    \reset{\inctx{E}{\app{\rawshift}{\rawshift}}} & \rawred \\
    \reset{\app{\rawshift}{\lam{x}{\reset{\inctx{E}{x}}}}} & \rawred \\
    \reset{\app{(\lam{x}{\reset{\inctx{E}{x}}})}{(\lam{x}{\reset{x}})}} & \rawred \\
    \reset{\reset{\inctx{E}{\lam{x}{\reset{x}}}}} &
  \end{align*}

  \vspace{2mm}\noindent In particular, if $E = \mtpctx$, then the value of the
  initial term, after two additional reduction steps, is $\lam{x}{\reset{x}}$,
  i.e., the representation of the empty context in the calculus of delimited
  control.
\end{exa}

A term $t$ either uniquely reduces to another term, or is an eager normal form:
it is either a value $v$, an open stuck term $\inctx{F}{\app{x}{v}}$, or a \textit{
  control-stuck term} $\inctx{\pctx}{\app{\rawshift}{v}}$. The latter cannot
reduce further since it lacks a reset enclosing~$\rawshift$. In the original
reduction semantics~\cite{Biernacka-al:LMCS05}, derived from the higher-order
evaluator implementing the denotational semantics of \textshift and
\textreset~\cite{Danvy-Filinski:LFP90}, it was assumed that programs are
evaluated in an enclosing top-level \textreset{}. Here, however, we consider a
relaxed semantics~\cite{Biernacki-al:HAL15} that lifts this requirement. Such a
semantics corresponds to some of the existing implementations of
delimited-control operators~\cite{Filinski:POPL94} that make it possible to
observe control-stuck programs (raising a ``missing reset'' exception if one
forgot the top-level reset) and it scales to control operators that can remove
the enclosing delimiter, e.g., as in general calculi with multiple
prompts~\cite{Dybvig-al:JFP06,Aristizabal-al:LMCS17}. This choice does not
influence the operational semantics of \textshift and \textreset in any other
way.
%% \footnote{The bisimilarity we study here is obviously sound with
%%   respect to the notion of contextual equivalence which restricts the
%%   class of testing contexts to delimited contexts only.}

Because $\textshift{}$ can decompose contexts, we have to change the relation
$\testevctx{R}$ as discussed in~\cite{Biernacki-al:HAL15}:

\begin{mathpar}
  \inferrule
      {\inctx E x \rel \inctx {E'} x \\ x \mbox{ fresh}}
      {E \testevctx\rel E'}
  \and
  \inferrule
      {\reset{\inctx E x} \rel \reset{\inctx {E'} x} \\
        \inctx{F}{x} \rel \inctx{F'}{x} \\
        x \mbox{ fresh}}
      {\inctx{F}{\reset{E}} \testevctx\rel {\inctx{F'}{\reset{E'}}}}
\end{mathpar}

\vspace{2mm}\noindent We also introduce a relation $\testctrl{R}$ to handle
control-stuck terms:

\begin{mathpar}
  \inferrule
      {E \testevctx\rel E' \\
       \reset{\app v x} \rel \reset{\app w x} \\
       x \mbox{ fresh}}
      {\inctx E {\app \rawshift v} \testctrl\rel \inctx {E'}{\app \rawshift {w}}}
\end{mathpar}

\vspace{2mm}\noindent whereas the relation $\testval\rel$ remains unchanged, so
that it accounts for the $\eta$-law, even though the values now include
$\rawshift$.

We can now define (extensional) normal-form bisimulation and bisimilarity for
the extended calculus, again using the notion of diacritical progress.
\begin{defi}%
  \label{def:nf-delcont}
  A relation $\rel$ diacritically progresses to $\rels$, $\relt$ written $\rel
  \pprogress \rels, \relt$, if $\rel \mathop\subseteq \rels$, $\rels
  \mathop\subseteq \relt$, and $t \rel s$ implies:
  \begin{itemize}
  \item if $\red t {t'}$, then there exists $s'$ such that $\redrtc s {s'}$ and
    $t' \relt s'$;
  \item if $t=v$, then there exists $w$ such that $\redto s w$, and $v
    \testval\rels w$;
  \item if $t=\inctx F {\app x v}$, then there exist $F'$, $w$ such that $\redto
    s {\inctx {F'}{\app x w}}$ and $\inctx F {\app x v} \testopen\relt \inctx
    {F'}{\app x w}$;
  \item if $t=\inctx E {\app \rawshift v}$, then there exist $E'$, $w$ such that
    $\redto s {\inctx {E'}{\app \rawshift {w}}}$ and $\inctx E {\app \rawshift
    v} \testctrl\relt \inctx {E'}{\app \rawshift w}$;
  \item the converse of the above conditions on $s$.
  \end{itemize}
  A normal-form bisimulation is a relation $\rel$ such that $\rel \pprogress
  \rel, \rel$, and normal-form bisimilarity $\nfbisim$ is the union of all
  normal-form bisimulations.
\end{defi}
\noindent Only the clause for values is passive, as in
Definition~\ref{def:progress}.

\begin{exa}%
\label{ex:double-shift}
The terms $\app{\rawshift}{\rawshift}$ and
$\app{\rawshift}{(\lam{k}{\app{k}{(\lam{x}{x})}})}$ are bisimilar since the
following relation is a normal-form bisimulation:

\[
  \begin{array}{ll@{\hspace{5mm}}ll}
    \idbis \cup  \{  & \hspace{-0.5em} (\app{\rawshift}{\rawshift}, &
    \app{\rawshift}{(\lam{k}{\app{k}{(\lam{x}{x})}})}), &\quad (1) \\
     & \hspace{-0.5em} (\reset{\app{\rawshift}{z}}, &
    \reset{\app{(\lam{k}{\app{k}{(\lam{x}{x})}})}{z}}), &\quad (2) \\
     & \hspace{-0.5em} (\reset{\app{z}{(\lam{x}{\reset{x}})}}, &
    \reset{\app{z}{(\lam{x}{x})}}), &\quad (3)\\
     & \hspace{-0.5em} (\app{(\lam{x}{\reset{x}})}{y}, & \app{(\lam{x}{x})}{y}), & \quad(4) \\
     & \hspace{-0.5em} (\reset{y}, & y) \;\} & \quad(5)
  \end{array}
  \]

  \vspace{2mm}\noindent where $\idbis$ is the identity relation and $x$, $y$,
  and $z$ fresh variables. In $(1)$ we compare two control-stuck terms, so to
  validate the bisimulation conditions, we have to compare the two empty
  contexts (which are in $\testevctx\idbis$) and the arguments of $\textshift$.
  Here, extensionality plays an important role, as these arguments are of
  different kinds ($\rawshift$ vs a $\lambda$-abstraction). We compare them by
  passing them a fresh variable $z$, thus we include the pair $(2)$ in the
  bisimulation. The terms of $(2)$ can be reduced to those in $(3)$, where we
  compare open stuck terms, so we have to include $(4)$ to compare the arguments
  of $z$. The terms in $(4)$ can then be reduced to the ones in $(5)$ which in
  turn reduce to identical terms.
\end{exa}

%%%%%%%%%%%%%%%%%%%%%%%%%%
\subsection{Up-to techniques}%
\label{ss:delcon-up-to-nf}

\begin{figure}
\begin{mathpar}
\inferrule{t \rel s \\ E \testevctx\rel E'}
{\inctx E t \utpctx\rel \inctx {E'} {s}}
\and
\inferrule{t \rel s \\ \reset{E} \testevctx\rel \reset{E'}}
{\reset{\inctx E t} \utpctxrst\rel \reset{\inctx {E'} s}}
\and
\inferrule{t \rel s  \\ t, s \mbox{ pure} \\ \inctx F x \rel \inctx {F'} x \\ x \mbox{ fresh}}
{\inctx F {t} \utectxpure\rel \inctx {F'}{s}}
\end{mathpar}
\caption{Up-to techniques specific to the $\lambda$-calculus extended with
  \textshift{} and \textreset{}}%
\label{fig:upto-shift}
\end{figure}

The up-to techniques we consider for this calculus are the same as in
Figure~\ref{fig:upto-lambda}, except we replace $\rawutectx$ by three
more fine-grained up-to techniques defined in
Figure~\ref{fig:upto-shift}. The techniques $\rawpctx$, $\rawpctxrst$
allow to factor out related pure contexts and pure contexts with a
surrounding reset. The third one ($\rawectxpure$) can be used only
with pure terms, but uses a naive comparison between any evaluation
contexts instead of~$\testevctx\cdot$.  Indeed, a pure term cannot
evaluate to a control-stuck term, so decomposing contexts
with~$\testevctx\cdot$ is not necessary. The usual reasoning up to
evaluation context $\rawutectx$ can be obtained by composing these
three up-to techniques.

\begin{lem}
If $t \rel t'$ and $F \testevctx\rel F'$ then
$\inctx F t \mathrel{(\rawpctx \cup (\rawectxpure\comp\rawpctxrst))(\rel)} \inctx{F'}{t'}$.
\end{lem}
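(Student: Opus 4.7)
The plan is a case analysis on the evaluation context~$F$, mirroring the two rules that define $\testevctx$ on evaluation contexts: either $F$ is itself a pure context, or it decomposes uniquely as $\inctx{F_0}{\reset{E_0}}$, where $E_0$ is the pure context lying between the hole of~$F$ and the innermost enclosing reset.

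In the pure case, only the first rule for $\testevctx$ can derive the hypothesis, so $F' = E'$ is also pure and we have $\inctx E x \rel \inctx{E'} x$ for some fresh~$x$. Combined with $t \rel t'$, this is exactly what the $\rawpctx$ rule consumes, and one derivation step gives $\inctx E t \utpctx\rel \inctx{E'} t'$, placing the pair in $\rawpctx(\rel)$. In the other case, only the second rule for $\testevctx$ applies, so $F' = \inctx{F_0'}{\reset{E_0'}}$ and for some fresh~$x$ we obtain both $\reset{\inctx{E_0} x} \rel \reset{\inctx{E_0'} x}$ and $\inctx{F_0} x \rel \inctx{F_0'} x$. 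I would first invoke the $\rawpctxrst$ rule, fed by $t \rel t'$ and by $\reset{E_0} \testevctx\rel \reset{E_0'}$ (itself witnessed by the first of these two premises), to obtain $\reset{\inctx{E_0} t} \utpctxrst\rel \reset{\inctx{E_0'} t'}$. Since both reset-wrapped terms are pure, I would then wrap them with the outer evaluation contexts $F_0,F_0'$ via $\rawectxpure$, whose fresh-variable side condition on the outer contexts is delivered by the second decomposition premise; this places the final pair in $\rawectxpure(\rawpctxrst(\rel))$.

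The delicate point, and the step I would expect to have to justify most carefully, is the premise alignment in the second case: $\rawectxpure$ applied to $\rawpctxrst(\rel)$ asks its inner-term and outer-context premises to live in the same relation $\rawpctxrst(\rel)$, whereas the fresh-variable test on the outer contexts is supplied to us only in~$\rel$. This is precisely the sort of bookkeeping that motivates refining the monolithic closure $\rawutectx$ of Section~\ref{s:lamcal} into the three finer-grained techniques of Figure~\ref{fig:upto-shift}, and once it is threaded through, both cases close naturally.
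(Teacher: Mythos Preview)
The paper states this lemma without proof, so there is no in-text argument to compare against directly. Your case split on the two rules for $\testevctx$ is the natural one, and the pure case closes via a single application of $\rawpctx$ exactly as you describe. In the non-pure case, however, the ``delicate point'' you flag is not bookkeeping that can be threaded through---it is a hard obstruction, and your plan does not close as written.

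Concretely, to land in $\rawectxpure(\rawpctxrst(\rel))$ \emph{both} premises of $\rawectxpure$ must lie in $\rawpctxrst(\rel)$: you need not only $\reset{\inctx{E_0} t} \utpctxrst\rel \reset{\inctx{E_0'}{t'}}$ but also $\inctx{F_0} x \utpctxrst\rel \inctx{F_0'} x$. The latter is impossible in general, since every pair produced by $\rawpctxrst$ is reset-headed on both components and $F_0$ need not begin with a reset. What your construction actually yields is membership in $\rawectxpure\bigl((\rawid \cup \rawpctxrst)(\rel)\bigr)$, with the inner-term premise supplied by $\rawpctxrst$ and the outer-context premise by $\rawid$; compare Lemma~\ref{l:app}, where the analogous unions with $\rawid$ are made explicit. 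So either the printed target is missing a $\rawid$, or a genuinely different argument is required---and your closing sentence assumes away precisely this question. There is also a smaller wrinkle you pass over: deriving $\reset{E_0} \testevctx\rel \reset{E_0'}$ from $\reset{\inctx{E_0} x} \rel \reset{\inctx{E_0'} x}$ alone is not literally licensed by the second $\testevctx$ rule, since instantiating it with the empty outer context still demands $x \rel x$; if you are reading the $\rawpctxrst$ premise as shorthand for the fresh-variable test, say so rather than asserting the derived form.
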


\noindent We do not define extra up-to techniques corresponding to the new
constructs of the language, as $\textshift$ is dealt with like variables---using
$\rawrefl$, and closure w.r.t. $\textreset$ can be deduced from $\rawpctxrst$ by
taking the empty context. Defining a dedicated up-to technique for $\textreset$
would have some merit since it could be proved strong. It is not so for
$\rawpctxrst$, as we can see in the next theorem:

\begin{thm}%
  \label{t:compatible-shift}
  The set $\setF \defeq \{ \rawrefl, \rawlam, \rawutsubst,
  \rawpctx, \rawpctxrst, \rawectxpure, \rawid, \rawutred\}$ is
  diacritically compatible, with $\strong \setF = \setF \setminus \{
  \rawpctx, \rawpctxrst, \rawectxpure \}$.
\end{thm}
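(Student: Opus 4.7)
The plan is to follow the blueprint established by Theorem~\ref{t:compatible-lambda}: for each function $f \in \setF$, exhibit a (regular or strong) evolution result of the shape required by Definition~\ref{def:diac-comp}, whereupon compatibility is automatic. The five strong candidates $\rawrefl$, $\rawlam$, $\rawutsubst$, $\rawid$, $\rawutred$ must satisfy $\sevolve$ targets in which the passive component is built only from other strong techniques; the three non-strong candidates $\rawpctx$, $\rawpctxrst$, $\rawectxpure$ need only $\fevolve$ targets respecting the ``at most one non-strong function after a passive clause'' constraint.

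For the techniques inherited from Section~\ref{s:lamcal}---namely $\rawrefl$, $\rawlam$, $\rawutsubst$, $\rawid$, $\rawutred$---the existing evolution proofs carry over almost verbatim, once we add new case analyses for the two extra shapes of normal form in this extended calculus: control-stuck terms $\inctx{E}{\app{\rawshift}{v}}$ and composite evaluation contexts $\inctx{F}{\reset{E}}$. Most new sub-cases are routine, but the argument for $\rawutsubst$ requires extra care: when the substituted value is a $\lambda$-abstraction and the body contains a sub-term of the form $\inctx{E}{\app{\rawshift}{v}}$ that becomes active under a surrounding $\reset{}$, the substitution triggers a context capture, and we must conclude as in Lemma~\ref{lem:subst}, appealing to $\testval\rel$ through the passive value clause.

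The three new non-strong techniques require dedicated analyses. For $\rawpctx$ the reasoning closely parallels Lemma~\ref{l:app-lambda}: given $\inctx{E}{t} \utpctx\rel \inctx{E'}{s}$ with $t \rel s$ and $E \testevctx\rel E'$, a case split on $t$ shows that if $t$ is a value forming a new redex with $E$, or is an open or control-stuck term, the surrounding pure context composes cleanly with the inner one, and the resulting sub-components are related by suitable combinations of $\rawutsubst$, $\rawpctx$, and $\rawectxpure$. For $\rawpctxrst$ the key new complication is that when $t$ reduces to, or equals, a control-stuck term $\inctx{E''}{\app{\rawshift}{v}}$, the outermost $\reset{}$ triggers a capture of the composed pure context, producing $\reset{\app{v}{\lam{y}{\reset{\inctx{E}{\inctx{E''}{y}}}}}}$; matching this on the right relies crucially on the second premise of the $\testevctx$ rule for $\reset{E}$, which supplies $\reset{\inctx{E}{y}} \rel \reset{\inctx{E'}{y}}$, letting us conclude via a combination of $\rawutsubst$, $\rawpctxrst$, and $\rawpctx$. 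The technique $\rawectxpure$ is comparatively straightforward since a pure term cannot reduce to a control-stuck term, so the decomposing behavior of $\testevctx{\cdot}$ is never exercised here, and the usual reasoning from Lemma~\ref{l:app-lambda} suffices after a case split on the pure term.

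The chief obstacle is bookkeeping: we must check that each composition of techniques we land in after an active clause fits within $\mathord{\fid\setF}^\omega$, and that each composition produced after a passive clause uses at most one non-strong technique for $\rawpctx$, $\rawpctxrst$, $\rawectxpure$, or none at all for the strong members. The capture case of $\rawpctxrst$ is where this constraint is tightest, since a single evolution step produces a $\rawutsubst$ over a $\rawpctx$, and we must ensure no further non-strong factor sneaks in. Once the case analyses are discharged with these bounds in mind, Theorem~\ref{t:compatible-shift} follows, and $\strong\setF$ is precisely the subset $\{\rawrefl, \rawlam, \rawutsubst, \rawid, \rawutred\}$ for which strong evolution was established.
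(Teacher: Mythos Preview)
Your overall plan matches the paper's: establish (strong) evolution for each function in $\setF$ and invoke Definition~\ref{def:diac-comp}. However, you have misidentified the key new case for $\rawutsubst$. The genuinely new difficulty is not when the substituted value is a $\lambda$-abstraction, but when it is $\rawshift$ itself. Substituting $\rawshift$ for $x$ in an open stuck term $\inctx{F_1}{\app{x}{v}}$ yields $\inctx{F_1}{\app{\rawshift}{\subst v x \rawshift}}$, which is either control-stuck (if $F_1$ is pure) or reduces by capture (if $F_1 = \inctx{F_1'}{\reset{E_1}}$). It is precisely here that the refined relation $\testevctx{\cdot}$ pays off: since $F_1 \testevctx\relt F_2$ forces $F_2$ to have the same pure-versus-composite shape, both sides stay in lockstep, and the reducts are related using $\rawectxpure$, $\rawlam$, and $\rawutsubst$. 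The case you describe---a $\lambda$-abstraction whose body happens to contain a $\rawshift$---is not what distinguishes this calculus from the plain one: substituting a $\lambda$-abstraction and $\beta$-reducing is already covered by the argument of Lemma~\ref{lem:subst}, and any $\rawshift$ inside the body only becomes relevant at a later progression step, not in the current one.

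A smaller point: for $\rawpctxrst$, the capture case you highlight is indeed a new sub-case, but it arises in an \emph{active} clause (control-stuck terms) and so has no bearing on whether $\rawpctxrst$ is strong. The reason $\rawpctxrst$ fails to be strong is the analogue of Lemma~\ref{l:app-lambda}: when $t$ is a variable $x$ and $\reset{E} = \reset{\app{\hole}{w}}$, you must fall back on the premise $\reset{\inctx{E}{y}} \rel \reset{\inctx{E'}{y}}$ and conclude via $\rawutsubst$, which requires $\rel \pprogress \rel, \relt$ rather than $\rel \pprogress \rels, \relt$. Your bookkeeping paragraph suggests you think the tight constraint occurs in the capture case; it does not---that case is active and lands freely in ${\fid\setF}^\omega$.
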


\noindent
The evolution proofs are by case analysis on the possible reductions that the
related terms can do. The techniques $\rawpctx$, $\rawpctxrst$, and
$\rawectxpure$ are not strong as they exhibit the same problematic case
presented in Lemma~\ref{l:app-lambda} (for $\rawpctx$ and $\rawectxpure$) or a
slight variant ($\reset \pctx = \reset{\app \hole v}$ and $t =x$ for
$\rawpctxrst$).

The main difference with the $\lambda$-calculus evolution proofs is the case
analysis on the behavior of \textshift, which may perform a capture or not,
depending on its surrounding evaluation context. For example, in the case of
$\rawutsubst$, the substituted value may be \textshift, which, when replacing
$x$ in a term $t = \inctx {F_1}{\app x v}$, may either produce a control-stuck
term if~$F$ is pure, or otherwise reduce. The hypotheses we have on the
evaluation contexts thanks to $\testevctx\cdot$ allow us to conclude in each
case. Indeed, if $t \rel s$ and $\rel \pprogress \rels,\relt$, then
$\redto s {s'}$ for some $s' = \inctx {F_2}{\app x w}$ such that
$F_1 \testevctx\relt F_2$ and $v \testval\relt w$. If $F_1$ is pure, then $F_2$
is also pure and $\subst t x \rawshift$ and $\subst {s'} x \rawshift$ are both
control-stuck. Otherwise, $F_1 = \inctx {F'_1}{\reset {E_1}}$ and
$F_2 = \inctx {F'_2}{\reset {E_2}}$ for some $F'_1$, $F'_2$, $E_1$, and $E_2$
such that $\inctx {F'_1} y \relt \inctx {F'_2} y$ and
$\reset{\inctx {E_1} y} \relt \reset{\inctx {E_2} y}$ for a fresh $y$. Then
$\red {\subst t x \rawshift}{\subst{\inctx {F'_1}{\reset{\app v {\lam y
          {\reset{\inctx {E_1} y}}}}}} x \rawshift}$ and
$\red{\subst {s'} x \rawshift}{\subst{\inctx {F'_2}{\reset{\app w {\lam y
          {\reset{\inctx {E_2} y}}}}}} x \rawshift}$, and the two resulting
terms can be related using mainly $\rawectxpure$, $\rawlam$ and $\rawutsubst$.

As in Section~\ref{s:lamcal}, from
Theorem~\ref{t:compatible-shift} and
Lemma~\ref{l:properties-compatibility-better} it follows that $\bisim$
is a congruence, and, therefore, is sound w.r.t.\ the contextual
equivalence of~\cite{Biernacki-al:HAL15}; it is not complete as showed
in op.\ cit.

\begin{exa}
  With these up-to techniques, we can simplify the bisimulation of
  Example~\ref{ex:double-shift} to just a single pair
  \[\mathord{\rel} \defeq \{ (\app{\rawshift}{\rawshift},
  \app{\rawshift}{(\lam{k}{\app{k}{(\lam{x}{x})}})}) \}.\]
  Indeed, we have $\lam x {\reset x} \utlam{\utred{\utrefl \rel}} \lam x x$, and
  $z \iapp \hole \testevctx{\utrefl \rel} z \iapp \hole$, so (3) is in
  $f(\rel)$, where
  $f \defeq \rawpctxrst \comp (\rawrefl \cup (\rawlam \comp \rawutred \comp
  \rawrefl))$.
  Then, (2) is in $\utred{f(\rel)}$, and for~(1), we have also to relate $\hole$
  with $\hole$, thus (1) is in $\utrefl\rel \cup \utred{f(\rel)}$. As a result,
  $\rel$ is a bisimulation up to $\rawutred$, $\rawrefl$, $\rawpctxrst$, and
  $\rawlam$.
\end{exa}

%%%%%%%%%%%%%%%%%%%%%%%%%%%%%%%%%%%%%%%%%%%%%%%%%%%
\section{Abortive Control}%
\label{s:abortcon}

In contrast with delimited-control operators, abortive control
operators capture the whole surrounding context and reify it as an
abortive computation. As such they are considerably more challenging
as far as modular reasoning is concerned in that one has to take into
account reduction of complete programs, rather than of their
subexpressions. In this section we apply our technique to a calculus
with \textcallcc and
\textabort~\cite{Felleisen-Friedman:FDPC3,Felleisen-Hieb:TCS92}, for
which no sound normal-form bisimilarity has been defined so far.

%%%%%%%%%%%%%%%%%%%%%%%%%%%%%%%%%%%%%%%%%%%%%%%%%%%
\subsection{Syntax and semantics}%
\label{ssec:syntax-lac}

The syntax of the $\lambda$-calculus with abortive control operators
\textcallcc ($\callcc$) and \textabort ($\abrt$) can be defined as the
following extension of the syntax of~Section~\ref{s:lamcal}:

\[
\begin{array}{rcl}
  t, s & \bnfdef & \ldots \bnfor \abort t
  \\
  v,w & \bnfdef & \ldots \bnfor \callcc
\end{array}
\]

\vspace{2mm}\noindent with the same syntax of evaluation contexts as
for the $\lambda$-calculus:

\[
  E \bnfdef \hole \bnfor \app v E \bnfor \app E t
\]

\vspace{2mm}\noindent However, the reduction rules for abortive
control operators considered in this section, unlike the ones for the
pure lambda calculus or delimited control operators, are not
compatible with respect to evaluation contexts, i.e., $t \rawred t'$
does not imply $\inctx E t \rawred \inctx E {t'}$. As such they are
meant to describe evaluation of complete programs. In order to reflect
this requirement, we introduce a separate syntactic category of
programs $p$, and we adjust the grammar of terms accordingly:

\[
  \begin{array}{rcl}
    t, s & \bnfdef & \ldots \bnfor \abort p
    \\
    p, q & \bnfdef & t
  \end{array}
\]

\vspace{2mm}\noindent The remaining productions for terms as well as
the grammars of values and evaluation contexts do not change. However,
we introduce a category of \emph{program contexts} $F$, comprising all
evaluation contexts

\[
F \bnfdef E
\]

\vspace{2mm}\noindent and make it explicit in the reduction rules
below that they give semantics to complete programs by manipulating
program contexts. While this modification may seem cosmetic, it
reflects the idea of abortive continuations representing ``the rest of
the computation'' and it allows us to introduce, later on in this
section, a minimal extension that makes it possible to define sound
normal-form bisimulations for the original calculus.

The call-by-value semantics of the calculus is defined by the
following rules.
\vspace{1mm}
\begin{align*}
  \inctx F {\app {(\lam x t)} v} & \rawred
  \inctx F {\subst t x v}
  \\[1mm]
  \inctx F {\app \callcc v} & \rawred \inctx F {\app v {\lam y
                                       {\abort {\inctx F y}}}} \mbox{ with } y
                                       \notin \fv F
  \\[1mm]
  \inctx F {\abort p} & \rawred p
\end{align*}

\vspace{2mm}\noindent When the capture operator $\callcc$ is applied
to a value $v$, it makes a copy of the current program context $F$ as
a value and passes it to $v$. The captured context $F$ is reified as a
$\lambda$-abstraction whose body is guarded by \textabort, which when
evaluated removes its enclosing context, effectively restoring $F$ as
the current program context. Hence the syntactic requirement that
\textabort be applied to a program and not to an arbitrary term. The
semantics of the control operators is illustrated in the following
examples.

\begin{exa}
  Assuming that $x \not\in \fv{F_2} \cup \fv{v}$, we have the
  following reduction sequence:

  \begin{align*}
    \inctx{F_1}{\app{\callcc}{\lam{x}{\inctx{F_2}{\app{x}{v}}}}} & \rawred \\
    \inctx{F_1}{\app{(\lam{x}{\inctx{F_2}{\app{x}{v}}})}{\lam{y}{\abort{\inctx{F_1}{y}}}}} & \rawred \\
    \inctx{F_1}{\inctx{F_2}{\app{\lam{y}{\abort{\inctx{F_1}{y}}}}{v}}} & \rawred \\
    \inctx{F_1}{\inctx{F_2}{\abort{\inctx{F_1}{v}}}} & \rawred \\
    \inctx{F_1}{v} &
  \end{align*}
\end{exa}

\begin{exa}
  This example illustrates the operational behavior of \textcallcc as
  a value:

  \begin{align*}
    \inctx{F}{\app{\callcc}{\callcc}} & \rawred \\
    \inctx{F}{\app{\callcc}{\lam{x}{\abort{\inctx{F}{x}}}}} & \rawred \\
    \inctx{F}{\app{(\lam{x}{\abort{\inctx{F}{x}}})}{\lam{x}{\abort{\inctx{F}{x}}}}} & \rawred \\
    \inctx{F}{\abort{\inctx{F}{\lam{x}{\abort{\inctx{F}{x}}}}}} & \rawred \\
    \inctx{F}{\lam{x}{\abort{\inctx F {x}}}} &
  \end{align*}

  \vspace{2mm}\noindent
  In particular, if $F = \mtectx$, then the value of the initial term
  is $\lam{x}{\abort{x}}$, i.e., the representation of the empty
  context in the calculus of abortive control.
\end{exa}

%%%%%%%%%%%%%%%%%%%%%%%%%%%%%%%%%%%%%%%%%%%%%%%%%%%
\subsection{Normal-form bisimulation}

In the presence of abortive control operators the notion of normal
form makes sense only for complete programs. Therefore, in order to
define a notion of normal-form bisimulation for the calculus under
consideration we need to somehow take into account how two given terms
behave in any program context. In order to make such tests possible we
introduce a distinct set of \emph{context variables}, ranged over by
$\cv$, to represent abstract contexts that are a dual concept to the
fresh variables used for testing functional values, as argued in
Remark~\ref{r:abstrvars}.

We only need to extend the syntax of programs and program contexts:

\[
\begin{array}{rcl}
  p & \bnfdef & \dots \bnfor \appk \cv t
  \\[1mm]
  F & \bnfdef & \dots \bnfor \appk \cv E
\end{array}
\]

\vspace{2mm}\noindent A program $\appk{\cv}{t}$ stands for the term
$t$ plugged in a program context represented by the context variable
$\cv$---we deliberately abuse the notation. Similarly, a program
context $\appk \cv E$ stands for the context $E$ completed with a
program context represented by $\cv$. We write $\fk p$ for the set of
context variables of a program $p$.
% The semantics of the extended language is given by the same reduction
% rules as for the original calculus.

Free variables represent unknown or abstract values in a program and
the substitution $\subst p x v$ can be seen as an operation that
replaces an abstract value $x$ with a concrete value $v$.  We define
an analogous concretization operation for context variables that we
call \emph{context substitution}: program $\substc p \cv F$ is a
program $p$, where all occurrences of $\cv$ are replaced by $F$.  More
formally, we have

\[
\begin{array}{rcl}
\substc {\appk \cv t} \cv F & = & \inctx F {\substc t \cv F} \\[1mm]
\substc {\appk \cw t} \cv F & = & \appk \cw {\substc t \cv F}
\qquad \textrm{when $\cw \neq \cv$} % chktex 1
\end{array}
\]

\vspace{2mm}\noindent and for terms the context substitution is
applied recursively to its sub-terms and sub-programs. Note that in
the first case the substitution plugs the term $\substc t \cv F$ in
the program context $F$. Context substitution preserves reduction.

\begin{lem}%
  \label{l:red-ab}
  If $\red p q$, then $\red{\substc p \cv F}{\substc q \cv F}$.
\end{lem}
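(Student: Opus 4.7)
The plan is to proceed by case analysis on the reduction derivation $\red p q$, supported by two auxiliary substitution-commutativity lemmas that I would establish first by straightforward structural induction. To state them, I would extend context substitution to program contexts in the natural way, with the key clauses $\substc{\appk \cv E}{\cv}{F} = \inctx{F}{\substc E \cv F}$ and $\substc{\appk \cw E}{\cv}{F} = \appk \cw {\substc E \cv F}$ for $\cw \neq \cv$, and recursive distribution in all other cases. One checks immediately that this operation preserves the grammatical category of program contexts, since plugging an evaluation context into a program context again yields a program context. The two required lemmas are (i) plugging distributivity, $\substc{\inctx{F_0} t}{\cv}{F} = \inctx{\substc{F_0}{\cv}{F}}{\substc t \cv F}$, proved by induction on $F_0$, and (ii) commutativity with value substitution, $\substc{\subst t x v}{\cv}{F} = \subst{\substc t \cv F}{x}{\substc v \cv F}$, where $x$ is taken fresh for $F$ via $\alpha$-conversion.

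With (i) and (ii) in place, each case of $\red p q$ reduces to a direct calculation. Writing $p = \inctx{F_0} r$ for the appropriate redex $r$, an application of (i) gives $\substc p \cv F = \inctx{\substc{F_0}{\cv}{F}}{\substc r \cv F}$. For the $\beta$ case, $\substc r \cv F$ is itself a $\beta$-redex whose contraction in the substituted context yields, after one application of (ii), the substituted form of the reduct. For the $\callcc$ case, $\substc r \cv F = \app \callcc {\substc v \cv F}$, and firing the $\callcc$ rule in the context $\substc{F_0}{\cv}{F}$ produces precisely the substituted form of the reduct, provided the side condition $y \notin \fv{\substc{F_0}{\cv}{F}}$ is satisfied, which I would ensure by $\alpha$-renaming $y$ to be fresh for $F$ as well. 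For the $\abort$ case, the enclosing context is discarded regardless of its shape, and the calculation is immediate.

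The main subtlety is the definitional setup of context substitution on program contexts, specifically verifying that the clause $\substc{\appk \cv E}{\cv}{F} = \inctx{F}{\substc E \cv F}$ consistently produces a well-formed program context, so that the subsequent application of the reduction rule on the substituted left-hand side is legitimate. Beyond this definitional care and the routine $\alpha$-renaming in the $\callcc$ case, no genuine obstacle arises and the argument is entirely standard, mirroring the classical substitution-preserves-reduction lemmas for the plain $\lambda$-calculus.
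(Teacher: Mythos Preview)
Your proposal is correct and is the standard argument one would expect. The paper does not spell out a proof of this lemma in the text; it states the result and defers to the Coq formalization for the details. Your decomposition---extending context substitution to program contexts, proving plugging distributivity $\substc{\inctx{F_0} t}{\cv}{F} = \inctx{\substc{F_0}{\cv}{F}}{\substc t \cv F}$ and commutation with value substitution, then doing a case analysis on the three reduction rules---is precisely the natural route and matches what a mechanized proof would unfold to. The points you flag (that $\substc{F_0}{\cv}{F}$ is again a program context because plugging an evaluation context into a program context yields a program context, and the need to $\alpha$-rename $y$ fresh for $F$ in the $\callcc$ case) are the only places requiring care, and you handle them correctly. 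One small addition worth making explicit: context substitution preserves the value category (variables, $\callcc$, and $\lambda$-abstractions all remain values under $\substc{\cdot}{\cv}{F}$), which is needed for the $\beta$ and $\callcc$ cases to yield a redex on the substituted side.
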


\noindent
When we substitute for a context variable, this property can be seen
as a form of preservation of reduction by program contexts.  For
example, if $p = \appk \cv {(\app \callcc \lam x t)}$ with
$\cv \notin \fk F$, then
$\redrtc p {\appk \cv {\subst t x {\lam y {\abort {\appk \cv y}}}}}$,
and
$\redrtc {\substc p \cv F}{\inctx F {\subst t x {\lam y {\abort
        {\inctx F y}}}}}$.

Normal forms of the extended language are either values $v$, open
stuck terms $\inctx F {\app x v}$, or \emph{context-stuck terms} of
the form $\appk \cv v$. We define progress, bisimulation, and
bisimilarity on programs of the extended language, and we define
$\testtm \rel$ to apply these notions to terms. We change $\testval
\rel$ accordingly and we remind the definitions of $\testevctx \rel$
and $\testopen \rel$.

\begin{mathpar}
  \inferrule{\appk \cv t \rel \appk \cv s \\ \cv \mbox{ fresh}}
  {t \testtm\rel s}
  \and
  \inferrule{\app v x \testtm\rel \app  w x \\ x \mbox{ fresh}}
  {v \testval\rel w}
  \\
  \inferrule{\inctx F x \rel \inctx {F'} x \\ x \mbox{ fresh}}
  {F \testevctx\rel F'}
  \and
  \inferrule{F \testevctx\rel F' \\ v \testval\rel w}
  {\inctx F {\app x v} \testopen\rel \inctx{F'}{\app x w}}
\end{mathpar}

\vspace{2mm}
\begin{defi}%[Diacritical progress]
  \label{def:progress-ab}
  A relation $\rel$ \textit{diacritically progresses} to $\rels$, $\relt$
  written $\rel \pprogress \rels, \relt$, if $\rel \mathop\subseteq
  \rels$, $\rels \mathop\subseteq \relt$, and $p \rel q$ implies:
  \begin{itemize}
  \item if $\red p {p'}$, then there exists $q'$ such that $\redrtc q {q'}$ and $p'
    \relt q'$;
  \item if $p=v$, then there exists $w$ such that $\redto q w$;
  \item if $p=\inctx F {\app x v}$, then there exist $F'$, $w$ such that
    $\redto q {\inctx {F'}{\app x w}}$ and $\inctx F {\app x v}
    \testopen\relt \inctx {F'}{\app x w}$;
  \item if $p=\appk \cv v$, then there exist $w$ such that
    $\redto q \appk \cv w$ and $v \testval\rels w$;
  \item the converse of the above conditions on $q$.
  \end{itemize}
  A normal-form bisimulation is a relation $\rel$ such that
  $\rel \pprogress \rel, \rel$, and normal-form bisimilarity $\nfbisim$ is the
  union of all normal-form bisimulations.
\end{defi}

In the value case $p = v$, we simply ask $q$ to evaluate to $w$ without
requiring anything of $v$ and $w$; we therefore equate all values when
considered as programs. A value without context variable implies that the
context has been aborted, and therefore $v$ or $w$ cannot be applied to a
testing variable. As an example, $\abort v$ and $\abort w$ are bisimilar for all
$v$ and $w$; when plugged into a context $F$, these terms remove their current
evaluation context as soon as they are run, so there is no way for the context
to pass an argument to these values.

In contrast, if $p=\appk \cv v$, then we require that $q$ evaluates to a
context-stuck term $\appk \cv w$ with the same context variable $\cv$, and we
test the two values by applying them to an abstract value in an abstract
context. Indeed, the program context represented by $\cv$ can either discard the
value plugged in it or it can apply it to an argument. Obviously, we cannot
equate $\appk \cv v$ and $\appk {\cw} v$ if $\cv \neq \cw$, as $\cv$ and $\cw$
stand for possibly different program contexts.

The bisimilarity presented here has a similar structure to the
bisimilarity for $\lambda\mu$-calculus~\cite{Parigot:LPAR92} presented
in~\cite{Stoevring-Lassen:POPL07}. It should come as no surprise,
because the programs and program contexts of the extended calculus are
reminiscent of, respectively, named terms and named contexts of the
$\lambda\mu$-calculus, and the context substitution coincides with
some presentations of structural substitution in
$\lambda\mu$-calculus~\cite{Ariola-al:HOSC07,Stoevring-Lassen:POPL07}.
The resemblance is superficial though. In particular, context
variables come from the `abstract context' interpretation and are not
bound by any construct. The structure of programs is also richer: not
all programs have to be considered in an abstract context, so the
second case of the definition~\ref{def:progress-ab} is characteristic
of the calculus with \textcallcc{} and \textabort{}, and is not
present in the definition of bisimulations for $\lambda\mu$-calculus.

\begin{rem}%
  \label{r:pgms-in-gen}
  One could introduce the notion of program and program context along
  with context variables in Section~\ref{s:lamcal} and~\ref{s:delcon}
  to define the notion of normal-form bisimulation explicitly on
  programs of the form $\appk{\cv}{t}$, in the spirit of
  Definition~\ref{def:progress-ab}. However, since the constructs in
  these calculi do not manipulate the complete program contexts, such
  definitions would trivially reduce to the ones we have presented.
\end{rem}

We use the \textcallcc axiomatization of Sabry and
Felleisen~\cite{Sabry-Felleisen:LFP92} as a source of examples.

\begin{exa}
  To prove that
  $\lam x {\app \callcc {\lam y {\app x y}}} \testval\nfbisim \callcc$
  ($\eta_{v_2}$ axiom), we compare
  $\appk \cv {\app {(\lam x {\app \callcc {\lam y {\app x y}}})} z}$ and
  $\appk \cv {\app \callcc z}$, and both reduce to
  $\appk \cv {\app z {\lam x {\abort {\appk c x}}}}$.
\end{exa}

\begin{exa}%
  \label{ex:current}
  The $C_{current}$ axiom equates $\app \callcc {\lam x {\app x t}}$ and
  $\app \callcc {\lam x t}$ for all $t$. We have to relate
  $\appk \cv {\app \callcc {\lam x {\app x t}}}$ and
  $\appk \cv {\app \callcc {\lam x t}}$ for a fresh $\cv$. Reducing these
  programs give us respectively
  $\appk \cv {\app {(\lam y {\abort {\appk \cv y}})}{\subst t x {\lam y {\abort
          {\appk \cv y}}}}}$ and
  $\appk \cv {\subst t x {\lam y {\abort {\appk \cv y}}}}$. From there, we can
  conclude with a tedious case analysis on the behavior of $t$; we need to
  distinguishes cases based on whether
  $\appk \cw {\subst t x {\lam y {\abort {\appk \cv y}}}}$ (where $\cw$ is
  fresh) evaluates to a value, an open stuck term, or an in-context value, with
  $\cw$ as a context variable or not. The proof is greatly simplified with up-to
  techniques (see Example~\ref{ex:current-upto}).
\end{exa}

\begin{figure}
\begin{mathpar}
\inferrule{ }{p \utrefl\rel p}
\and
\inferrule{ }{v \utresult\rel w}
\and
\inferrule{p \rel q}{\appk \cv {\abort p} \utabort\rel \appk \cv {\abort q}}
\and
\inferrule{t \testtm\rel s}{\appk \cv {\lam x t} \utlam\rel \appk \cv {\lam x s}}
\and
\inferrule{p \rel q \\ v \testval\rel w}
{\subst p x v \utsubstv\rel \subst q x w}
\and
\inferrule{p \rel q \\ F \testevctx\rel F' }
{\substc p \cv F \utsubstc\rel \substc q \cv {F'}}
\and
\inferrule{\redrtc p {p'} \\ \redrtc {q} {q'} \\ p' \rel q'}{p \utred\rel q}
\end{mathpar}
\caption{Up-to techniques for the $\lambda$-calculus with \textcallcc and \textabort}%
\label{fig:upto-callcc}
\end{figure}

\subsection{Up-to techniques} Figure~\ref{fig:upto-callcc} presents the up-to
techniques for the $\lambda$-calculus with \textcallcc and \textabort. The main
difference with the previous calculi is that reasoning up to evaluation
contexts $\rawutectx$ is replaced by $\rawsubstc$, which is bit more general. We
can deduce the former from the latter, as $t \testtm\rel s$ implies
$\appk \cv t \rel \appk \cv s$ for a fresh $\cv$, which in turn implies
$\substc {\appk \cv t} \cv F \utsubstc\rel \substc {\appk \cv s} \cv {F'}$,
i.e., $\inctx F t \utsubstc\rel \inctx {F'} s$. But we can also factorize
several occurrences of a context with $\rawsubstc$, as, e.g.,
$\inctx F {\abort {\inctx F t}}$ can be written
$\substc {\inctx \cv {\abort {\appk \cv t}}} \cv F$ for $\cv \notin \fk t$.

The other novelty is $\rawresult$, which expresses the fact that
values do not need to be tested when they are not in-context. Among
all these up-to techniques, only $\rawsubstc$ is not strong, which is
expected as it behaves like $\rawutectx$.

\begin{thm}%
  \label{t:compatible-calcc}
  The set $\setF \defeq \{ \rawrefl, \rawresult,
  \rawabort, \rawlam, \rawsubstv, \rawsubstc, \rawutred \}$ is
  diacritically compatible, with $\strong \setF = \setF \setminus \{ \rawsubstc \}$.
\end{thm}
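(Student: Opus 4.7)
The plan is to verify the two conditions of Definition~\ref{def:diac-comp} for $\setF$. For every $f \in \strong\setF$, I will establish strong evolution $f \sevolve {\fid{\strong\setF}}^\omega, {\fid\setF}^\omega$, and for $\rawsubstc$ I will only establish the weaker regular evolution $\rawsubstc \fevolve {\fid{\strong\setF}}^\omega \comp \fid\setF \comp {\fid{\strong\setF}}^\omega, {\fid\setF}^\omega$. Each evolution proof proceeds by unfolding the definition of the up-to technique, then performing case analysis on which clause of Definition~\ref{def:progress-ab} is triggered (reduction, value, open-stuck, or context-stuck), following the pattern already exercised in Lemmas~\ref{lem:subst} and~\ref{l:app-lambda}.

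The easy cases go first: $\rawrefl$ and $\rawutred$ are handled exactly as in Examples~\ref{ex:upto-refl} and~\ref{ex:red-sevolve}, with the extra context-stuck clause being straightforward since it only requires tracking an unchanged $\cv$. The technique $\rawresult$ strongly evolves to $\rawresult$ in the value clause (no other clause applies, since a bare value is never a program-level redex beyond itself). For $\rawabort$, when $\appk\cv{\abort p} \rel' \appk\cv{\abort q}$ with $p \rel q$, the only reduction available is the abort step, yielding the progression $p \rel q$; this is handled strongly. For $\rawlam$, the only clause that fires is the context-stuck one, where the tested values $\lam x t$ and $\lam x s$ are applied to a fresh variable and one step of $\beta$-reduction reveals $t \testtm\rel s$, which by Lemma~\ref{l:red-ab} and $\rawutred$ gives progression within $\strong\setF$. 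The case for $\rawsubstv$ follows the template of Lemma~\ref{lem:subst}: a case analysis on $p$ (value, reducing term, open-stuck, context-stuck) together with an analysis of whether the substituted $v$ is a variable, a $\lambda$-abstraction, or $\callcc$; thanks to Lemma~\ref{l:red-ab}, reductions under context substitutions are preserved, and each subcase closes within $(\rawid \cup \rawsubstc) \comp \rawsubstv \comp (\rawid \cup \rawsubstv)$.

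The main obstacle is $\rawsubstc$, the sole non-strong technique. Assume $\substc p \cv F \utsubstc\rel \substc q \cv{F'}$ with $p \rel q$ and $F \testevctx\rel F'$, and that $\rel \progress \rel, \rels$. I case-split on $p$. If $p$ reduces, Lemma~\ref{l:red-ab} lifts the reduction through the substitution and we close under $\rawsubstc$ applied to the progressed pair. If $p$ is an open stuck term $\inctx{F_1}{\app x v}$, substitution gives $\inctx{\substc{F_1}\cv F}{\app x {\substc v \cv F}}$, still an open stuck term; the hypotheses on $F, F'$ and $\testopen$ are combined with $\rawsubstc$ on the components. The critical subcase, mirroring Lemma~\ref{l:app-lambda}, is when $p$ is context-stuck of the form $\appk\cv x$: then $\substc p \cv F = \inctx F x$, and by the value clause of progress there exists $y$ with $\redto{q}{\appk\cv y}$ and $x \testval\rel y$, hence $\substc q \cv{F'} \redrtc \inctx{F'}{y}$. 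When $y$ is a $\lambda$-abstraction and $F$ starts with an application frame, $\inctx F x$ is an open stuck term while $\inctx{F'}{y}$ reduces further; this is exactly where diacritical progress is needed: because the clause for values is passive, we obtain $x \testval\rel y$ (not $\rels$) and can therefore invoke Lemma~\ref{lem:subst}-style reasoning with $\rawsubstv$ to line up the two open stuck forms, closing inside ${\fid{\strong\setF}}^\omega \comp \rawsubstc \comp {\fid{\strong\setF}}^\omega$.

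Finally, the remaining context-stuck subcases (e.g., $p = \appk{\cw}{v}$ with $\cw \neq \cv$, or $p = \appk\cv w$ for an arbitrary value $w$ with $F$ empty or non-empty) are handled analogously, relying on $\rawsubstv$ and $\rawsubstc$ to recombine the pieces, and on $\rawresult$ when the substitution produces a bare value with no context variable left. Once each evolution is established, the partition $\strong\setF = \setF \setminus \{\rawsubstc\}$ directly satisfies the two clauses of Definition~\ref{def:diac-comp}, concluding the theorem. The hard part throughout is precisely the $\rawsubstc$ subcase above, which fails in a non-diacritical setting for the same reason as in Lemma~\ref{l:app-lambda}, namely the lack of any usable progress information about the value-clause target when it is needed under an evaluation context frame.
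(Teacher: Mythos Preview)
Your plan is correct and matches the paper's approach: establish (strong) evolution for each technique in $\setF$ by case analysis on the progress clauses, with $\rawsubstc$ being the single non-strong technique whose passive context-stuck subcase reproduces the difficulty of Lemma~\ref{l:app-lambda}. The paper's own proof is only a sketch that singles out one subcase you treat more briefly---the $\callcc$-capture case of $\rawsubstv$, where the captured context must be rewritten as a context substitution so that $\rawsubstc$ applies---so you may want to flesh out that particular step, but otherwise your decomposition and identification of the hard case are the same as the paper's.
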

Like for the previous calculi, we show that each function in $\setF$
evolves towards a combination of functions in $\setF$, by doing a case
analysis on the possible reductions of the related terms. Context
variables make the treatment of the capture by \textcallcc a bit
different than for \textshift, as now each capture should be written
as a context substitution, to be able to conclude using
$\rawsubstc$. For example, consider as in
Section~\ref{ss:delcon-up-to-nf} $\subst t x \callcc \utsubstv\rel
\subst s x \callcc$ for some $t = \inctx F {\app x v}$ and $\rel$ such
that $\rel \pprogress \rels, \relt$. There exists $F'$ and $w$ such
that $\redto s {\inctx {F'}{\app x w}}$, $F \testevctx\relt F'$ and $v
\testval\relt w$. Then $\red {\subst t x \callcc}{\subst {\inctx F
    {\app v {\lam y {\abort {\inctx F y}}}}} x \callcc}$ and $\redrtc
              {\subst s x \callcc}{\subst {\inctx {F'}{\app w {\lam y
                      {\abort {\inctx {F'} y}}}}} x \callcc}$ for a
              fresh $y$; to relate the two resulting terms, we start
              from $v \testval\relt w$, which by definition implies
              $\appk \cv {\app v z} \relt \appk \cv {\app w z}$ for
              fresh $\cv$ and $z$. Then because $\lam y {\abort {\appk
                  \cv y}} \utrefl \relt \lam y {\abort {\appk \cv
                  y}}$, we can relate $\appk \cv {\app v \lam y
                {\abort {\appk \cv y}}}$ and $\appk \cv {\app w \lam y
                {\abort {\appk \cv y}}}$ with $\rawrefl$ and
              $\rawsubstv$, and then $\subst{\substc {\appk \cv {\app
                    v \lam y {\abort {\appk \cv y}}}} \cv F} x
              \callcc$ and $\subst{\substc {\appk \cv {\app w \lam y
                    {\abort {\appk \cv y}}}} \cv {F'}} x \callcc$
              using also $\rawsubstc$. The last two programs are equal
              to the ones we want to relate, so we can conclude.

As before, we get as a corollary of Theorem~\ref{t:compatible-calcc}
and Lemma~\ref{l:properties-compatibility-better} that
$\testtm\nfbisim$ is a congruence. The bisimilarity is also sound
w.r.t.\ contextual equivalence when we restrict ourselves to the
calculus of Section~\ref{ssec:syntax-lac}. (It can be shown that it is
not complete, e.g., $\lam{x}{\app{x}{I}}$ and
$\lam{x}{\app{(\lam{y}{\app{x}{I}})}{(\app{x}{I})}}$ are contextually
equivalent, but not bisimilar.)
\begin{thm}%
  \label{t:soundness-callcc}
  Let $t$ and $s$ be terms built from the grammar of
  Section~\ref{ssec:syntax-lac}. If $t \testtm\nfbisim s$, then~$t$ and~$s$ are
  contextually equivalent.
\end{thm}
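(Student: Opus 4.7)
The plan is to combine the congruence of $\testtm\nfbisim$ (established just before the theorem as a corollary of Theorem~\ref{t:compatible-calcc} and Lemma~\ref{l:properties-compatibility-better}) with the preservation of reduction under context substitution (Lemma~\ref{l:red-ab}). First I fix a closing context $C$ from the original calculus and assume $t \testtm\nfbisim s$. By congruence, $\inctx C t \testtm\nfbisim \inctx C s$, which unfolds to $\appk \cv {\inctx C t} \nfbisim \appk \cv {\inctx C s}$ for a fresh context variable $\cv$; crucially, $\cv$ appears in neither $\inctx C t$ nor $\inctx C s$, as these are built from the original grammar.

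The second step is to establish a translation between convergence in the original calculus and convergence in the extended one. For the forward direction, any reduction $\redrtc{\inctx C t}{v}$ in the original calculus lifts step by step to $\redrtc{\appk \cv {\inctx C t}}{\appk \cv v}$ in the extended one, because each reduction happens inside the program context $\appk \cv \hole$. For the backward direction, Lemma~\ref{l:red-ab} (iterated over reduction steps) applied with the context substitution $\cv \mapsto \hole$ yields $\redrtc{\inctx C s}{\substc{q}{\cv}{\hole}}$ from any $\redrtc{\appk \cv {\inctx C s}}{q}$, and a straightforward induction on the structure of values shows that $\substc{w}{\cv}{\hole}$ is a value whenever $w$ is.

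Third, I combine the ingredients. If $\redrtc{\inctx C t}{v}$ with $v$ a value, then $\redrtc{\appk \cv {\inctx C t}}{\appk \cv v}$ is a reduction to a context-stuck term. Instantiating the context-stuck clause of Definition~\ref{def:progress-ab} with $\appk \cv {\inctx C t} \nfbisim \appk \cv {\inctx C s}$ produces a value $w$ with $\redto{\appk \cv {\inctx C s}}{\appk \cv w}$. Applying Lemma~\ref{l:red-ab} with $\cv \mapsto \hole$ then yields $\redrtc{\inctx C s}{\substc{w}{\cv}{\hole}}$, a value, so $\inctx C s$ converges. The reverse implication follows by the symmetry of $\nfbisim$.

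The main obstacle is the bookkeeping between the two calculi: one must verify that a reduction of $\appk \cv {\inctx C s}$ to a context-stuck term $\appk \cv w$ always maps, via the empty-context substitution, to a terminating reduction of $\inctx C s$ in the original calculus. This hinges on Lemma~\ref{l:red-ab}, on the preservation of value-hood by context substitution, and on the freshness of $\cv$ ensuring that the original-calculus portion of the programs is left untouched throughout.
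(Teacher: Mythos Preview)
Your forward-direction claim in step~2 is incorrect: reductions of $\inctx C t$ do \emph{not} lift step by step into the program context $\appk \cv \hole$, precisely because \textcallcc{} and \textabort{} manipulate the whole program context. Concretely, the rule $\inctx F {\abort p} \rawred p$ discards $F$ entirely, so $\appk \cv {\abort v} \rawred v$, not $\appk \cv v$; and when \textcallcc{} fires under $\appk \cv E$ it reifies $\appk \cv E$, not $E$, so the reduct differs from $\appk \cv$ applied to the original-calculus reduct. Hence $\appk \cv {\inctx C t}$ need not terminate at a context-stuck term $\appk \cv v$: it may terminate at a plain value. Your step~3 then uses only the context-stuck clause of Definition~\ref{def:progress-ab}, which does not cover this case.

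The strategy can be repaired---Lemma~\ref{l:red-ab} with $F=\hole$ together with determinism shows that $\inctx C t$ terminates iff $\appk \cv {\inctx C t}$ does, after which one invokes whichever of the value or context-stuck clauses applies---but the paper takes a shorter route that avoids the translation entirely. From $\appk \cv {\inctx C t} \nfbisim \appk \cv {\inctx C s}$ one applies the compatible technique $\rawsubstc$ (Theorem~\ref{t:compatible-calcc} and Lemma~\ref{l:properties-compatibility-better}) with $F=F'=\hole$ to obtain $\inctx C t \nfbisim \inctx C s$ directly as programs. Since $\inctx C t$ and $\inctx C s$ are closed and contain no context variables, their only possible normal forms are values, and the value clause of Definition~\ref{def:progress-ab} immediately gives that one terminates iff the other does.
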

\begin{proof}
  Let $C$ be a context built from the grammar of Section~\ref{ssec:syntax-lac}
  which closes $t$ and $s$. By Theorem~\ref{t:compatible-calcc} and
  Lemma~\ref{l:properties-compatibility-better}, we have
  $\inctx C t \nfbisim \inctx C s$.  Therefore, by the definition of the
  bisimilarity, $\inctx C t$ terminates iff $\inctx C s$ terminates.
\end{proof}

\begin{exa}%
  \label{ex:current-upto}
  We prove the $C_{current}$ axiom (Example~\ref{ex:current}) by showing that
  \[\rel \defeq \{(\appk \cv {\app \callcc {\lam x {\app x t}}}, \appk \cv {\app
      \callcc {\lam x t}})\}\] is a bisimulation up to $\rawsubstc$,
  $\rawutred$, and $\rawrefl$. These programs reduce in two steps to
  respectively
  $\appk \cv {\app {(\lam y {\abort {\appk \cv y}})}{\subst t x {\lam y {\abort
          {\appk \cv y}}}}}$ and
  $\appk \cv {\subst t x {\lam y {\abort {\appk \cv y}}}}$. These two programs
  can be written $\substc p \cw {F_1}$ and $\substc p \cw {F_2}$ with $\cw$
  fresh, $p = \appk \cw {\subst t x {\lam y {\abort {\appk \cv y}}}}$,
  $F_1 = \appk \cv {\app{(\lam y {\abort {\appk c y}})} \hole}$, and
  $F_2 = \appk \cv \hole$. Because
  $\redrtc{\appk \cv {\app{(\lam y {\abort {\appk c y}})} z}}{\appk \cv z}$ for
  a fresh $z$, we have $F_1 \testevctx{\utred{\utrefl \rel}} F_2$, which implies
  \(\appk \cv {\app {(\lam y {\abort {\appk \cv y}})}{\subst t x {\lam y {\abort
            {\appk \cv y}}}}} \utsubstc{\utred{\utrefl \rel}}\appk \cv {\subst t x
      {\lam y {\abort {\appk \cv y}}}}. \)
\end{exa}

\begin{exa}
  The $C_{tail}$ axiom relates $\app \callcc {\lam y {\app {(\lam x t)} s}}$ and
  $\app {(\lam x {\app \callcc {\lam y t}})} s$ if $y \notin \fv s$. When
  surrounded with a fresh context variable $\cv$, the former reduces in two
  steps to $\appk \cv {\app {(\lam x {\subst t y {\lam z {\abort {\appk \cv
              z}}}})} s}$, so we prove that
  \[\rel \defeq \{(\appk \cv {\app {(\lam x {\subst t y {\lam z {\abort {\appk \cv
                z}}}})} s}, \appk \cv {\app {(\lam x {\app \callcc {\lam y t}})}
      s})\}\] is a bisimulation up to $\rawsubstc$, $\rawutred$, and
  $\rawrefl$. Again, we notice that these programs can be written
  $\substc {\appk \cw s} \cw {F_1}$ and $\substc {\appk \cw s} \cw {F_2}$ with
  $\cw$ fresh,
  $F_1 = \appk \cv {\app {(\lam x {\subst t y {\lam z {\abort {\appk \cv z}}}})}
    \hole}$, and
  $F_2 = \appk \cv {\app {(\lam x {\app \callcc {\lam y t}})} \hole}$. If $z'$
  is a fresh variable, then
  $\redrtc{\inctx {F_1}{z'}}{\appk \cv {\subst{\subst t y {\lam z {\abort {\appk
              \cv z}}}} x {z'}}}$ and
  $\redrtc{\inctx {F_2}{z'}}{\appk \cv {\subst {\subst t x {z'}} y {\lam z
        {\abort {\appk \cv z}}}}}$. The variables being pairwise distinct, the
  order of substitutions does not matter, therefore $\inctx {F_1}{z'}$ and
  $\inctx {F_2}{z'}$ reduce to identical programs. We can then conclude as in
  Example~\ref{ex:current-upto}. Without up-to techniques, we would have to do a
  case analysis on the behavior of $s$ to conclude.
\end{exa}
The remaining axioms~\cite{Sabry-Felleisen:LFP92} can be proved easily in a
similar way thanks to up-to techniques.

%%%%%%%%%%%%%%%%%%%%%%%%%%%%%%%%%%%%%%%%%%%%%%%%%%%
\section{Conclusion}%
\label{s:conclusion}

In this article we present a new approach to proving soundness of
normal-form bisimilarities as well as of bisimulations up to context
%and other up-to techniques
that allow for $\eta$-expansion. The method we develop is based on our
framework~\cite{Aristizabal-al:FSCD16} that generalizes the work of
Madiot et al.~\cite{Madiot-al:CONCUR14,Madiot:PhD} in that it allows
for a special treatment of some of the clauses in the definition of
bisimulation. In particular, we show soundness of an extensional
bisimilarity for the call-by-value $\lambda$-calculus and of the
corresponding up to context, where it is critical that
comparing values in a way that respects $\eta$-expansion is done
passively, i.e., by requiring progress of a relation to
itself. Following the same route, we obtained similar results for the
extension of the call-by-value $\lambda$-calculus with delimited
control, where the set of normal forms is richer, and we believe this
provides an evidence for the robustness of the method. To the best of
our knowledge, there has been no soundness proof of extensional
normal-form up to context technique for any of the two calculi
before. Furthermore, we successfully applied our technique to a theory
of extensional normal-form bisimulations for the call-by-value
$\lambda$-calculus with \textcallcc and \textabort, which has not existed in the
literature before.

The proof method we propose should trivially apply to the existing
non-ex\-ten\-sio\-nal whnf
bisimilarities~\cite{Lassen:99,Lassen:MFPS99,Lassen:MFPS05} and extensional hnf
bisimilarities~\cite{Lassen:MFPS99,Lassen:LICS06} for the $\lambda$-calculus and
its variants. Since such whnf bisimilarities do not take into account
$\eta$-expansion, their testing of values would be active and all their up-to
techniques would be strong, so actually Madiot's original framework is
sufficient to account for them. In the case of extensional hnf bisimilarities,
normal forms are generated by the grammar:

\[
\begin{array}{rcl}
  h & \bnfdef & \lam{x}{h} \bnfor n
  \\
  n & \bnfdef & x \bnfor \app{n}{t}
\end{array}
\]

\vspace{2mm}\noindent and in order to account for $\eta$-expansion a
$\lambda$-abstraction $\lam{x}{h}$ is related to a normal form~$n$, provided $h$
is related to $\app{n}{x}$, a freshly created normal form. Then, relating normal
forms $\app{\app{x}{t_1}}{\dots t_m}$ and $\app{\app{y}{s_1}}{\dots s_n}$
requires that $x = y$, $m = n$, and for all $i$, $t_i$ is related to
$s_i$. Thus, in extensional hnf bisimulations the relation on normal forms
provides enough information to make testing of normal forms active just like in
non-extensional whnf bisimulations.

A possible direction for future research is to investigate whether our
method can be adapted to enriched normal-form bisimulations that take
advantage of an additional structure akin to environments used in
environmental bisimulations~\cite{Sangiorgi-al:TOPLAS11}. One such
theory is the complete enf bisimilarity for the
$\lambda\mu\rho$-calculus of sequential control and
state~\cite{Stoevring-Lassen:POPL07}, for which the existing proof of
congruence is intricately involved and no up-to context technique has
been developed.

\subsubsection*{Acknowledgments} We would like to thank the anonymous
reviewers of MFPS 2017 and LMCS for their helpful comments on the
presentation of this work.

\bibliographystyle{abbrv}
\bibliography{lmcs-mfps}

\end{document}